\tikzstyle{block} = [rectangle, draw, fill=white!20,
\tikzstyle{title} = [text width=7em, text centered, font=\bfseries]
\tikzstyle{line} = [draw, -latex']
\begin{document}

\newtheorem{theorem}{Theorem}
\newtheorem{corollary}[theorem]{Corollary}
\newtheorem{lemma}[theorem]{Lemma}
\newtheorem{observation}[theorem]{Observation}
\newtheorem{proposition}[theorem]{Proposition}
\newtheorem{definition}[theorem]{Definition}
\newtheorem{claim}[theorem]{Claim}
\newtheorem{fact}[theorem]{Fact}
\newtheorem{assumption}[theorem]{Assumption}
\newtheorem{model}[theorem]{Model}

\theoremstyle{definition}
\newtheorem{example}{Example}

\newcommand{\cM}{\mathcal{M}}
\newcommand{\cH}{\mathcal{H}}
\newcommand{\cD}{\mathcal{D}}
\newcommand{\FDR}{\textnormal{FDR}}
\newcommand{\FCR}{\textnormal{FCR}}
\newcommand{\crt}{\phi}
\newcommand{\M}{\mathcal{M}}
\newcommand{\cY}{\mathcal{Y}}
\newcommand{\cX}{\mathcal{X}}
\newcommand{\cV}{\mathcal{V}}
\newcommand{\bX}{\mathbf{X}}
\newcommand{\x}{\mathbf{x}}
\newcommand{\Gv}{\;\;\big|\;\;}
\newcommand{\proj}{\cP}
\newcommand{\pow}{\text{Pow}}
\newcommand{\supp}{\text{supp}}
\newcommand{\sF}{\mathscr{F}}
\newcommand{\cF}{\mathcal{F}}
\newcommand{\sC}{\mathscr{C}}
\newcommand{\hJ}{\widehat{J}}
\newcommand{\bH}{\mathbf{H}}
\newcommand{\bM}{\mathbf{M}}
\newcommand{\tM}{\widetilde{M}}
\newcommand{\tE}{\widetilde{E}}
\newcommand{\tV}{\widetilde{V}}
\newcommand{\tR}{\widetilde{R}}
\newcommand{\tL}{\widetilde{L}}
\newcommand{\hk}{\hat{k}}
\newcommand{\hr}{\hat{r}}       
\newcommand{\cN}{\mathcal{N}}
\newcommand{\cJ}{\mathcal{J}}
\newcommand{\cL}{\mathcal{L}}
\newcommand{\leqAS}{\overset{\textrm{a.s.}}{\leq}}
\newcommand{\Err}{\mathcal{E}}
\newcommand{\RSS}{\text{RSS}}

\newcommand*\mystrut{\vrule width0pt height0pt depth1.5ex\relax}
\newcommand{\underlabel}{\underbracket[1pt][.5pt]{\mystrut \quad\;\; \sub \quad\;\; }}
\newcommand{\JTcomment}[1]{{\color{blue}{(JT: \bf \sc #1) }}}
\newcommand{\WFcomment}[1]{{\color{red}{(WF: \bf \sc #1) }}}
\newcommand{\RTcomment}[1]{{\color{green}{(RT: \bf \sc #1) }}}
\newcommand{\RJTcomment}[1]{{\color{magenta}{(RJT: \bf \sc #1) }}}
\title{Selective Sequential Model Selection}
\author{William Fithian, Jonathan Taylor, Robert Tibshirani, and Ryan J. Tibshirani} 
\maketitle

\begin{abstract}
  Many model selection algorithms produce a path of fits specifying a sequence of increasingly complex models. Given such a sequence and the data used to produce them, we consider the problem of choosing the least complex model that is not falsified by the data. Extending the selected-model tests of \citet{fithian2014optimal}, we construct $p$-values for each step in the path which account for the adaptive selection of the model path using the data. In the case of linear regression, we propose two specific tests, the max-$t$ test for forward stepwise regression (generalizing a proposal of \citet{buja2014}), and the next-entry test for the lasso. These tests improve on the power of the saturated-model test of \citet{tibshirani2014exact}, sometimes dramatically. In addition, our framework extends beyond linear regression to a much more general class of parametric and nonparametric model selection problems.

To select a model, we can feed our single-step $p$-values as inputs into sequential stopping rules such as those proposed by \citet{gsell2013sequential} and \citet{li2015accumulation}, achieving control of the familywise error rate or false discovery rate (FDR) as desired. The FDR-controlling rules require the null $p$-values to be independent of each other and of the non-null $p$-values, a condition not satisfied by the saturated-model $p$-values of \citet{tibshirani2014exact}. We derive intuitive and general sufficient conditions for independence, and show that our proposed constructions yield independent $p$-values.
\end{abstract}

\section{Introduction}
\label{sec:introduction}
Many model selection procedures produce a sequence of increasingly complex models, leaving the data analyst to choose among them. 
Given such a path, we consider the problem of choosing the simplest model in the path that is not falsified by the available data. Examples of path algorithms include forward stepwise linear regression, least angle regression (LAR) and the ever-active path in lasso ($\ell_1$-regularized) regression. \citet{tibshirani2014exact} study methods for generating exact $p$-values at each step of these path algorithms, and their methods provide a starting point for our proposals. Other related works include \citet{loftus2014significance}, who describe $p$-values for path algorithms that add groups of variables (instead of individual variables) at each step, and \citet{choi2014selecting}, who describe $p$-values for steps of principle components analysis (each step marking the estimation of a principle component direction).

We consider the following workflow: to select a model, we compute a set of sequential $p$-values at each step of the model path, and then feed them into a stopping rule that is guaranteed to control the false discovery rate (FDR) \citep{benjamini1995controlling}, familywise error rate (FWER), or a similar quantity. Recently \citet{gsell2013sequential} and \citet{li2015accumulation} proposed sequential stopping rules of this kind. Both sets of rules require that once we have reached the first correct model in our path, the $p$-values in subsequent steps are uniform and independent. While this is not true of the $p$-values constructed in \citet{tibshirani2014exact}, in this paper we develop a theoretical framework for constructing sequential $p$-values satisfying these properties, and give explicit constructions.

Our approach and analysis are quite general, but we begin by introducing a specific example: the selective max-$t$ test for forward stepwise regression.  This is a selective sequential version of the max-$t$ test of \citet{buja2014}, generalized using the theory in \citet{fithian2014optimal}.

\subsection{The max-$t$ Test in Forward Stepwise Regression}

Forward stepwise regression is a greedy algorithm for building a sequence of nested linear regression models. At each iteration, it augments the current model by including the variable that will minimize the residual sum of squares (RSS) of the next fitted model. Equivalently, it selects the variable with the largest $t$-statistic, adjusting for the variables in the current model.

For a design matrix $X\in \R^{n\times p}$, with columns $X_1,\ldots X_p \in \R^n$, and a response vector $Y\in \R^n$, let $E \sub \{1, \ldots, p\}$ denote the current {\em active set}, the set of predictor variables already selected, and let $\RSS(E)$ denote the residual sum of squares for the corresponding regression model. For $j\notin E$ let $t_{j,E}(Y)$ denote the multivariate $t$-statistic of variable $j$, adjusting for the active variables. Using the standard result that
\begin{equation*}
t_{j,E}^2 = (n-|E|-1) \frac{\RSS(E) - \RSS(E \cup \{j\})}{\RSS(E \cup \{j\})},
\end{equation*}
we see that the next variable selected is 
\begin{equation*}
j^* = \argmax_{j\notin E} \, |t_{j,E}| = \argmin_{j\notin E} \, \RSS(E \cup \{j\}), 
\end{equation*}
with corresponding $t$-statistic $t_E^* = t_{j^*,E}$. The selective max-$t$ test rejects for large values of $|t_E^*|$, compared to an appropriate conditional null distribution that accounts for the adaptive selection of the model path.

Table~\ref{tab:diab} illustrates the max-$t$ test and two others applied to the diabetes data from \cite{LARS}, consisting of observations of $n=442$ patients. The response of interest is a quantitative measure of disease progression one year after baseline, and there are ten measured predictors --- age, sex, body-mass index, average blood pressure, and six blood serum measurements --- plus quadratic terms, giving a total of $p=64$ features. We apply forward stepwise regression to generate the model path (beginning with an intercept term, which we represent as a predictor $X_0=(1,\ldots,1)'$), and then use each of the three methods to generate $p$-values at each step along the path. Finally, we use the ForwardStop rule \citep{gsell2013sequential} at FDR level $\alpha=0.1$ to select a model based on the sequence of $p$-values. The bolded entry in each column indicates the last variable selected by ForwardStop.

\begin{table}[ht]
\centering
{\small \begin{tabular}{|l|c|c|c|c|c|}
\hline
 Step & Variable &  Nominal $p$-value &  Saturated $p$-value &  Max-$t$ $p$-value \\
\hline
    1 &      bmi &            0.00 &              0.00 &         0.00 \\
    2 &      ltg &            0.00 &              0.00 &         0.00 \\
    3 &      map &            0.00 &             {\bf 0.05} &         0.00 \\
    4 &  age:sex &            0.00 &              0.33 &         0.02 \\
    5 &  bmi:map &            0.00 &              0.76 &         0.08 \\
    6 &      hdl &            0.00 &              0.25 &         0.06 \\
    7 &      sex &            0.00 &              0.00 &         0.00 \\
    8 &    glu$^2$ &            0.02 &              0.03 &         {\bf 0.32} \\
    9 &    age$^2$ &            0.11 &              0.55 &         0.94 \\
   10 &  map:glu &            0.17 &              0.91 &         0.91 \\
   11 &       tc &            0.15 &              0.37 &         0.25 \\
   12 &      ldl &            0.06 &              0.15 &         0.01 \\
   13 &    ltg$^2$ &            0.00 &              0.07 &         0.04 \\
   14 &  age:ldl &            0.19 &              0.97 &         0.85 \\
   15 &   age:tc &            0.08 &              0.15 &         0.03 \\
   16 &  sex:map &            0.18 &              0.05 &         0.40 \\
   17 &      glu &            0.23 &              0.45 &         0.58 \\
   18 &      tch &       {\bf     0.31} &              0.71 &         0.82 \\
   19 &  sex:tch &            0.22 &              0.40 &         0.51 \\
   20 &  sex:bmi &            0.27 &              0.60 &         0.44 \\
\hline
\end{tabular}

}
\caption[tab:diab]{Forward stepwise regression for the diabetes data: naive $p$-values, $p$-values from the saturated model, and our max-$t$ $p$-values. The bold annotation in each column marks the model selected by ForwardStop with $\alpha=0.1$. For the max-$t$ $p$-values, this stopping rule gives exact FDR control at the 10\% level.}
\label{tab:diab}
\end{table}

The nominal (unadjusted) $p$-values in column 1 of Table~\ref{tab:diab} are computed by comparing $t_E^*$ to the $t$-distribution with $n-|E|-2$ degrees of freedom, which would be the correct distribution if the sequence of models were selected before observing the data. Because the model sequence is in fact selected adaptively to maximize the absolute value of $t_E^*$, this method is highly anti-conservative.

The max-$t$ test $p$-values in column 3 are computed using the same test statistic $|t_E^*|$, but compared with a more appropriate null distribution. We can simulate the distribution of $|t_E^*|$ under the null model, i.e., the linear regression model specified by active set $E$:
\begin{equation*}
M(E):\; Y \sim \cN(X_E\beta_E, \sigma^2 I_n), \quad \beta_E\in \R^{|E|}, \sigma^2>0,
\end{equation*}
where $X_E$ is the matrix with columns $(X_j)_{j\in E}$. As $U_E=\left(X_E'Y,\, \|Y\|^2\right)$ is the complete sufficient statistic for $M(E)$, we can sample from the conditional null distribution of $t_E^*$ given $U_E$ and the current active set $E$, using the {\em selected-model test} framework of \citet{fithian2014optimal}. In step one, because $E=\{0\}$ is fixed and $t_E^*$ is independent of $U_{\{0\}}=(\sum_i Y_i, \|Y\|^2)$, the conditional test reduces to the max-$t$ test proposed in \citet{buja2014}. In later steps, $t_E^*$ and $U_E$ are not conditionally independent given $E$, but we can numerically approximate the null conditional distribution through Monte Carlo sampling.

The saturated-model $p$-values in column 2 also account for selection, but they rely on somewhat different assumptions, condition on more information, and test a slightly different null hypothesis. We discuss these distinctions in detail in Section~\ref{sec:whichnull}.

For most of the early steps in Table~\ref{tab:diab}, the (invalid) nominal test gives the smallest $p$-values, followed by the max-$t$ test, then the saturated-model test. Both the max-$t$ and saturated-model $p$-values are exactly uniform under the null, but the max-$t$ $p$-values appear to be more powerful under the alternative. As we will discuss in Section~\ref{sec:selective-reg}, selected-model tests such as the max-$t$ can be much more powerful than saturated-model tests in early steps of the path when multiple strong variables are competing with each other to enter the model first. In the diabetes example, using the max-$t$ $p$-values, ForwardStop selects a model of size 8, compared to size 3 when using the saturated-model $p$-values. 

The null max-$t$ $p$-values are independent, while the saturated-model $p$-values are not, and hence ForwardStop is guaranteed to control FDR when using the former but not the latter. In Section~\ref{sec:pValsIndep} we discuss intuitive sufficient conditions for independence and show that the max-$t$ and many other selected-model tests satisfy these conditions.

\subsection{Outline}

For the remainder of the article we will discuss the problem of selective sequential model selection in some generality, returning periodically to the selective max-$t$ test in forward stepwise regression and its lasso regression counterpart, the next-entry test, as intuitive and practically important running examples. 

In comparison to their saturated-model counterparts, we will see that the selected-model tests proposed here have three main advantages: power, independence (in many cases), and generalizability beyond linear regression. These advantages do not come entirely for free, as we will see: when $\sigma^2$ is known, the selected-model methods test a more restrictive null hypothesis than the saturated-model methods. In addition, the selected-model tests require accept-reject or Markov Chain Monte Carlo (MCMC) sampling, while the saturated-model tests can be carried out in closed form.

Sections~\ref{sec:genericSetting} and \ref{sec:ordered} introduce the general problem setting and review relevant literature on testing ordered hypotheses using a sequence of $p$-values. Many of these methods require the $p$-values to be uniform and independent under the null, and we derive general conditions to ensure this property in Section~\ref{sec:pValsIndep}. In Section~\ref{sec:selective-reg} we contrast selected-model and saturated-model tests in the linear regression setting, and explain why selected-model tests are often much more powerful in early steps of the path. In Section~\ref{sec:computation} we discuss strategies for sampling from the null distribution for our tests, and prove that in forward stepwise regression and lasso ($\ell_1$-regularized) regression, the number of constraints in the sampling problem is never more than twice the number of variables. We provide a simulation study in Section~\ref{sec:sparseReg}, and Section~\ref{sec:further-examples} discusses applications of our framework beyond linear regression to decision trees and nonparametric changepoint detection. The paper ends with a discussion.

\section{Selective Sequential Model Selection}
\label{sec:genericSetting}

In the general setting, we observe data $Y \in \cY$ with unknown sampling distribution $F$, and then apply some algorithm to generate an adaptive sequence of $d$ nested statistical models contained in some {\em upper model} $M_{\infty}$:
\[
M_0(Y) \sub M_1(Y) \sub \cdots \sub M_d(Y) \sub M_\infty.
\]
By {\em model}, we mean a family of candidate probability distributions for $Y$. For example in  linear regression, a model specifies  the set of
predictors allowed to have nonzero coefficients (but not the  values of their coefficients). Note the ``assumption'' of an upper model involves no loss of generality --- we can always take $M_\infty$ as the union of all models under consideration. We will use the notation $M_{0:d}$ to denote the sequence $M_0, \ldots, M_d$, which we call the {\em model path}. We assume throughout that for each candidate model $M$ there is a minimal sufficient statistic $U(Y; \,M)$, and write
\[
U_k(Y) = U(Y; \,M_k(Y)).
\]
Given data $Y$ and model path $M_{0:d}$, we set ourselves the formal goal of selecting the simplest correct model: the smallest $k$ for which $F\in M_k(Y)$. Of course, in most real data problems all of the $M_k(Y)$ in the path, as well as all other models under consideration, are incorrect. In more informal terms, then, we seek the simplest model in the path that is not refuted by the available data.

Define the {\em completion index} by $k_0(Y, F) = \min\{k:\; F \in M_k(Y)\}$, the index of the first correct model. By construction, $F\in M_k \iff k \geq k_0$. A {\em stopping rule} is any estimator \smash{$\hk$} of $k_0$, with $M_k$ ``rejected'' if $k < \hk$, and ``accepted'' otherwise. Because \smash{$\hk$} is the number of models we do reject, and $k_0$ is the number we should reject, the number of type I errors is \smash{$V=(\hk-k_0)_+$}, while the number of type II errors is \smash{$(k_0-\hk)_+$}. Depending on the scientific context, we might want to control the FWER: $\P(V>0)$, the FDR: \smash{$\E[V/\hk; \hk>0]$}, or another error rate such as a modified FDR, defined by the expectation of some loss function \smash{$g(\hk, k_0)$}:
\begin{equation}\label{eq:errRate}
\Err_F(\hk(\cdot), g) = \E_F\left[ g\left(\hk(Y), k_0(Y, F)\right)\right].
\end{equation}

\subsection{Single-Step $p$-Values}\label{sec:singleStep}

We will restrict our attention to stopping rules like those recently proposed by \citet{gsell2013sequential} and \citet{li2015accumulation}, which operate on a sequence $p_{1:d}$ of $p$-values. At each step $k$, we will construct a $p$-value for testing
\[
H_{k}:\; F\in M_{k-1}(Y)
\]
against the alternative that $F\in M_\infty\setminus M_{k-1}$, accounting for the fact that $M_{k-1}$ is chosen adaptively.

Following \citet{fithian2014optimal}, we say that for a fixed candidate null model $M$, the random variable $p_{k,M}(Y)$ is a valid {\em selective $p$-value} for $M$ at step $k$ if it is stochastically larger than uniform (henceforth super-uniform) under sampling from any $F\in M$, given that $M$ is selected. That is,
\begin{equation*}
\P_F\left(p_{k,M}(Y) \leq \alpha \mid M_{k-1}(Y) = M\right) 
\leq \alpha, \quad \forall F\in M, \; \alpha \in [0,1].
\end{equation*}
Once we have constructed selective $p$-values for each fixed $M$, we can use them as building blocks to construct a combined $p$-value for the random null $M_{k-1}(Y)$. Define
\[
p_k(y) = p_{k, M_{k-1}(y)}(y),
\]
which is super-uniform on the event $\{F \in M_{k-1}(Y)\}$:
\begin{equation}\label{eq:selectiveGuaranteePk}
\P_F\left(p_k \leq \alpha \mid F\in M_{k-1}\right) \leq \alpha, \quad \forall \alpha \in [0,1].
\end{equation}
One useful strategy for constructing exact selective tests is to condition on the sufficient statistics of the null model $M_{k-1}$. By sufficiency, the law
\[
\cL_F(Y \;\mid\; U_{k-1}, \; M_{k-1}=M)
\]
is the same for every $F\in M$. Thus, we can construct selective tests and $p$-values by comparing the realized value of any test statistic $T_k(Y)$ to its known conditional distribution under the null. It remains only to choose the test statistic and compute its conditional null distribution, which can be challenging. See \citet{fithian2014optimal} for a general treatment. Section~\ref{sec:computation} discusses computational strategies for the tests we propose in this article.

\subsection{Sparse Parametric Models}\label{sec:genSparse}

Many familiar path algorithms, including forward stepwise regression, least angle regression (LAR), and the lasso regression (thought of as producing a path of models over its regularization parameter $\lambda$), are methods for adaptively selecting a set of predictors in linear regression models where we observe a random response $Y\in \R^n$ as well as a fixed design matrix $X\in \R^{n \times p}$, whose columns correspond to candidate predictors. For each possible active set $E \sub \{1,\ldots,p\}$, there is a corresponding candidate model
\[
M(E):\; Y \sim \cN( X_E\beta_E, \sigma^2 I_n),
\]
which is a subset of the {\em full model}
\[
M_\infty:\; Y \sim \cN(X\beta, \sigma^2I_n).
\]
If the error variance $\sigma^2$ is known, the complete sufficient statistic for $M(E)$ is $X_E'Y$; otherwise it is $\left(X_E'Y,\, \|Y\|^2\right)$.

For the most part, we can discuss our theory and proposals in a parametric setting generalizing the linear regression problem above. Let $M_\infty$ be a model parameterized by $\theta\in \Theta \sub \R^J$:
\[
M_\infty = \{F_\theta:\; \theta \in \Theta\}.
\]
For any subset $E\sub \{1,\ldots, J\}$ define the sparse submodel with active set $E$ as follows:
\[
\Theta(E) = \{\theta:\; \theta_j = 0, \;\;\forall j \notin E\}, 
\quad M(E) = \{F_\theta:\; \theta\in \Theta(E)\}.
\]
We can consider path algorithms that return a sequence of nested active sets
\[
E_0(Y) \sub E_1(Y) \sub \cdots \sub E_d(Y) \sub \{1,\ldots J\},
\]
inducing a model path with $M_k = M(E_k),$ for $k=0,\ldots,d$. We will be especially interested in two generic path algorithms for the sparse parametric setting: forward stepwise paths and ever-active regularization paths. As we will see in Section~\ref{sec:indep-greedy-entry}, both methods generically result in independent null $p$-values. For a nonparametric example see Section~\ref{sec:nonpar}.

\subsubsection{Forward Stepwise Paths and Greedy Likelihood Ratio Tests}
Let $\ell(\theta; Y)$ denote the log-likelihood for model
$M_\infty$. A generic {\em forward stepwise} algorithm proceeds as follows: we begin with some fixed set $E_0$ (such as an intercept-only model), then at step $k=1,\ldots,d$, we set
\begin{equation}
j_k = \argmax_j \; \sup \left\{\ell(\theta; Y):\; \theta\in\Theta(E_{k-1} \cup \{j\})\right\}, \quad \text{ and } \quad
E_k = E_{k-1} \cup \{j_k\}.
\end{equation}
That is, at each step we select the next variable to maximize the likelihood of the next fitted model (in the case of ties, we could either choose randomly or select both variables).

A natural choice of test statistic is the {\em greedy likelihood ratio statistic}
\begin{equation}\label{eq:greedyLRT}
G_k(Y) = \sup_{\theta\in \Theta(E_k)} {\ell(\theta; Y)} \;\;- \sup_{\theta\in \Theta(E_{k-1})} {\ell(\theta; Y)},
\end{equation}
which is the generalized likelihood ratio statistic for testing $M(E_{k-1})$ against the ``greedy'' alternative with one more active parameter, $\bigcup_{j \notin E_{k-1}} M(E_{k-1} \cup \{j\})$. The {\em selective greedy likelihood ratio test} rejects for large $G_k$, based on the law
\begin{equation*}
\cL\left(G_k \;\mid\; M_{k-1}, U_{k-1}\right)
\end{equation*}
Because the likelihood in linear regression is a monotone decreasing function of the residual sum of squares, the max-$t$ test is equivalent to the greedy likelihood ratio test. The counterpart of the max-$t$ test in linear regression with known $\sigma^2$ is the max-$z$ test, which differs only in replacing the $t$-statistics $t_{j,E}$ with their corresponding $z$-statistics. The max-$z$ test is also equivalent to the selective greedy likelihood ratio test.

For simplicity, we have implicitly made two assumptions: that only one variable is added at each step, and that the set of candidate variables we choose from is the same in each step. It is relatively straightforward to relax either assumption, but we do not pursue such generalizations here.

\subsubsection{Ever-Active Regularization Paths and Next-Entry Tests}
Another important class of model selection procedures is the sequence of {\em ever-active} sets for a regularized likelihood path, under a sparsity-inducing regularizer such as a scaled $\ell_1$ norm. The notion of ever-active sets is needed since these solution paths can drop  (as well as add) predictors along the way. For some ordered set $\Lambda$, let $\lambda\in\Lambda$ parameterize the regularization penalty $P_\lambda(\theta)$. As an example, for a lasso penalty, this is $P_\lambda = \lambda\|\beta\|_1$ with $\lambda\in (0,\infty)$. 

Assume for simplicity that there is a unique solution to each penalized problem of the form
\begin{equation}\label{eq:pen-lik}
  \hat\theta^{\lambda}(Y) =
  \argmin_{\theta\in\Theta} -\ell(\theta; Y) + P_\lambda(\theta).
\end{equation}
It may be impossible or inconvenient to compute \smash{$\hat\theta^\lambda(Y)$} for every $\lambda\in(0,\infty)$. If so, we can instead take $\Lambda$ to be a grid of finitely many values. 

We define the {\em ever-active set} for $\lambda\in\Lambda$ as
\begin{equation}
  \tE_\lambda(Y) = \left\{j:\; \hat\theta_j^\gamma(Y) \neq 0 
    \text{ for any } \gamma\in \Lambda, \gamma \geq \lambda \right\}
\end{equation}
Note that the ever-active sets \smash{$\tE_\lambda$} are nested by construction. In addition, we will assume \smash{$|\tE_\lambda|<\infty$} for every $\lambda\in\Lambda$. Our model path will correspond to the sequence of distinct ever-active sets. Formally, let 
\[
\lambda_0=\sup \Lambda, \quad \text{ and } \quad 
E_0 = \bigcap_{\lambda\in \Lambda} \tE_\lambda,
\]
and for $k\geq 1$, let $\lambda_k$ denote the (random) value of $\lambda$ where the active set changes for the $k$th time:
\begin{equation*}
  \Lambda_k = \{\lambda\in \Lambda:\; \tE_\lambda \supsetneq \tE_{\lambda_{k-1}}\},
  \quad
  \lambda_k = \sup\Lambda_k,
  \quad \text{ and } \quad
  E_k = \bigcap_{\lambda\in\Lambda_k} \tE_{\lambda}.
\end{equation*}

In this setting, $\lambda_k$ is a natural test statistic for model $M_{k-1}$, with larger values suggesting a poorer fit. The {\em selective next-entry test} is the test that rejects for large $\lambda_k$, based on the law
\begin{equation*}
\cL\left(\lambda_k \;\mid\; U_{k-1}, M_{k-1}\right).
\end{equation*}

\paragraph{Remark.} In its usual formulation, the lasso coefficients for linear regression minimize a penalized RSS criterion. If we replace $-\ell$ with any strictly decreasing function of the log-likelihood such as RSS, all of the results in this article hold without modification.

\subsection{Which Null Hypothesis?}
\label{sec:whichnull}

Note that in our formulation of the problem, the type I error \smash{$V=(\hk-k_0)_+$} is defined in a ``model-centric'' fashion: at step $k$ in linear regression, we are testing  whether a particular linear model $M(E_{k-1})$ adequately describes the data $Y$. Even if the next selected variable $X_{j_k}$ has a zero coefficient in the full model, it is not a mistake to reject $M(E_{k-1})$ provided there are some signal variables that have not yet been included.

Depending on the scientific context, we might want to define a type I error at step $k$ differently, by choosing a different null hypothesis to test. Let $\mu = \E Y$ and let $\theta^E$ denote the {\em least-squares coefficients} of active set $E$ --- the coefficients of the best linear predictor for the design matrix $X_E$:
\[
\theta^E = X_E^+ \mu = \argmin_{\theta\in \R^{|E|}} \; \|\mu - X_E\theta\|_2^2,
\]
where $X_E^+$ is the Moore-Penrose pseudoinverse of the matrix $X_E$.

\citet{gsell2013sequential} describe three different null hypotheses that we could consider testing at step $k$ in the case of linear regression:
\begin{description}
\item[Complete Null:] $M_{k-1}$ is (already) correct. That is, 
\[
H_k:\;\mu = X_{E_{k-1}} \theta^{E_{k-1}}.
\]
\item[Incremental Null:] $M_{k-1}$ may be incorrect, but $M_k$ is no improvement. That is, 
\[
H_k^{\text{inc}}:\; \theta_{j_k}^{E_k} = 0.
\]
\item[Full-Model Null:] The coefficient of $X_{j_k}$ is zero in the ``full'' model with all $p$ predictors. That is,
\[
H_k^{\text{full}}:\; \theta_{j_k}^{\{1,\ldots,p\}} = 0.
\]
\end{description}

While the complete null is the strongest null hypothesis of the three, the incremental null is neither weaker nor stronger than the full-model null. Defining 
\begin{align*}
V^{\text{inc}} &= \#\{k < \hk:\; H_k^{\text{inc}} \text{ is true}\}, \quad \text{ and } \\
V^{\text{full}} &= \#\{k < \hk:\; H_k^{\text{full}} \text{ is true}\},
\end{align*}
we can define an analogous FWER and FDR with respect to each of these alternative choices, and attempt to control these error rates. For example, we could define
\[
\text{FDR}^{\text{full}} = \E[V^{\text{full}} / (\hk \vee 1)],
\]
as the false discovery rate with respect to the full-model null. \citet{barber2014controlling} present a framework for controlling $\text{FDR}^{\text{full}}$.

The full-model null is the most different conceptually from the other two, taking a ``variable-centric'' instead of ``model-centric'' viewpoint, with the focus on discovering variables that have nonzero coefficients after adjusting for all $p-1$ other variables under consideration. To elucidate this distinction, consider a bivariate regression example in which the two predictors $X_1$ and $X_2$ are both highly correlated with $Y$, but are also nearly collinear with each other, making it impossible to distinguish which variable has the ``true'' effect. Any procedure that controls $\text{FWER}^{\text{full}}$ could not select either variable, and would return the empty set of predictors. By contrast, most of the methods presented in this article would select the first variable to enter (rejecting the global null model), and then stop at the univariate model. 

Similarly, consider a genomics model with quantitative phenotype $Y$ and predictors $X_j$, representing minor allele counts for each of $p$ single-nucleotide polymorphisms (SNPs). If correlation between neighboring SNPs (neighboring $X_j$'s) is high, it may be very difficult to identify SNPs that are highly correlated with $Y$, adjusting for all other SNPs; however, we might nevertheless be glad to select a model with a single SNP from each important {\em gene}, even if we cannot guarantee it is truly the ``best'' SNP from that gene.

As the above examples illustrate, methods that control full-model error rates are best viewed not as {\em model-selection} procedures --- since all inferences are made with respect to the full model --- but instead as {\em variable-selection} procedures that test multiple hypotheses with respect to a single model, which is specified ahead of time. The ``model'' returned by such procedures is not selected or validated in any meaningful sense. Indeed, in the bivariate example above, of the four models under consideration ($\emptyset$, $\{1\}$, $\{2\}$, and $\{1,2\}$), only the global null model is clearly inconsistent with the data; and yet, a full-model procedure is bound not to return any predictors.

Because the truth or falsehood of full-model hypotheses can depend sensitively on the set of $p$ predictors, rejecting $H_j^{\text{full}}$ has no meaning without reference to the list of all variables that we controlled for. As a result, rejections may be difficult to interpret when $p$ is large. Thus, error rates like $\text{FDR}^{\text{full}}$ are best motivated when the full model has some special scientific status. For example, the scientist may believe, due to theoretical considerations, that the linear model in $X_1,\ldots,X_p$ is fairly credible, and that a nonzero coefficient $\beta_j$, controlling for all of the other variables, would constitute evidence for a causal effect of $X_j$ on the response. 

In this article we will concern ourselves primarily with testing the complete null, reflecting our stated aim of {\em choosing the least complex model that is consistent with the data}. As we discuss further in Section~\ref{sec:selective-reg}, the saturated-model tests of \citet{tibshirani2014exact} are valid selective tests of $H_k^{\text{inc}}$. The advantage of these tests is that they are highly computationally efficient (they do not require sampling). But, unfortunately, they also carry a number of drawbacks: they require us to assume $\sigma^2$ is known, can result in a large reduction in power, create dependence between $p$-values at different steps, and are difficult to generalize beyond the case of linear regression.

\subsection{Related Work}

The problem of model selection is an old one, with quite an extensive literature. However, with the exception of the works above, very few methods offer finite-sample guarantees on the model that is selected except in the orthogonal-design case. One  exception is the knockoff filter of \citet{barber2014controlling}, a variant of which provably controls the full-model FDR.
We compare our proposal to the knockoff method in Section \ref{sec:sparseReg}.

Methods like AIC \citep{akaike1974new} and BIC \citep{schwarz1978estimating} are designed for the non-adaptive case, where the sequence of models is determined in advance of observing the data. Cross-validation, another general-purpose algorithm for tuning parameter selection, targets out-of-sample error and tends to select many noise variables when the signal is sparse \citep[e.g.][]{LY2015}. \citet{benjamini2009simple} extend the AIC by using an adaptive penalty $\sigma^2\lambda_{k,p} k$ to select a model, based on generalizing the Benjamini-Hochberg procedure \citep{benjamini1995controlling}, but do not prove finite-sample control. The stability selection approach of \citet{meinshausen2010stability} uses an approach based on splitting the data many times and offers asymptotic control of the FDR, but no finite-sample guarantees are available.

If the full model is sparse, and the predictors are not too highly correlated, it may be possible asymptotically to recover the support of the full-model coefficients with high probability --- the property of {\em sparsistency}. Much recent model-selection literature focuses on characterizing the regime in which sparsistency is possible; see, e.g. \citet{bickel2009simultaneous}, \citet{meinshausen2006high}, \citet{negahban2009unified}, \citet{van2009conditions}, \citet{wainwright2009sharp}, \citet{zhao2006model}. Under this regime, there is no need to distinguish between the ``model-centric'' and ``variable-centric'' viewpoints. However, the required conditions for sparsistency can be difficult to verify, and in many applied settings they fail to hold. By contrast, the methods presented here require no assumptions about sparsity or about the design matrix $X$, and offer finite-sample guarantees.

\section{Stopping Rules and Ordered Testing}\label{sec:ordered}

An ordered hypothesis testing procedure takes in a sequence of $p$-values $p_1, \ldots, p_d$ for null hypotheses $H_{1}, \ldots, H_{d}$, and outputs a decision \smash{$\hk$} to reject the initial block \smash{$H_{1}, \ldots, H_{\hk}$} and accept the remaining hypotheses. Note that in our setup the hypotheses are nested, with $H_{k-1} \sub H_k$; as a result, all of the false hypotheses precede all of the true ones.

We first review several proposals for ordered-testing procedures, several of which require independence of the null $p$-values conditional on the non-null ones. These procedures also assume the sequence of hypotheses is fixed, whereas in our setting the truth or falsehood of $H_k$ is random, depending on which model is selected at step $k$. In Section~\ref{sec:random-hyp} we show that the error guarantees for these stopping rules do transfer to the random-hypothesis setting, provided we have the same independence property conditional on the completion index $k_0$ (recall 
$k_0(Y, F) = \min\{k:\; F \in M_k(Y)\}$).

\subsection{Proposals for Ordered Testing of Fixed Hypotheses}
\label{sec:orderedProposals}

We now review several proposals for ordered hypotheses testing along with their error guarantees in the traditional setting, where the sequence of null hypotheses is fixed. In the next section we will extend the analysis to random null hypotheses.

The simplest procedure is to keep rejecting until the first time that $p_k > \alpha$: 
\[
\hk_B(Y) = \min\left\{k:\; p_k > \alpha\right\} - 1
\]
We will call this procedure {\em BasicStop}. It is discussed in \citet{marcus1976}.
Since $k_0$ is the index of the first null hypothesis, we have
\begin{equation}\label{eq:basic-stop-control}
 \text{FWER} \leq \P(p_{k_0} \leq \alpha) \leq \alpha.
\end{equation}

To control the FDR, \citet{gsell2013sequential} propose the {\em ForwardStop} rule:
\[
  \hk_{F}(Y) = \max\left\{k:\;
    -\frac{1}{k}\sum_{i=1}^k \log(1-p_i) \leq \alpha\right\}
\]
If $p_k$ is uniform, then $-\log(1-p_k)$ is an $\text{Exp}(1)$ random variable with expectation 1; thus, the sum can be seen as an estimate of the false discovery proportion (FDP):
\[
\widehat{\text{FDP}}_k = -\frac{1}{k}\sum_{i=1}^k \log(1-p_i),
\]
and we choose the {\em largest} model with \smash{$\widehat{\text{FDP}}_k \leq \alpha$}. \citet{gsell2013sequential} show that ForwardStop controls the FDR if the null $p$-values are independent of each other and of the non-nulls.

\citet{li2015accumulation} generalize ForwardStop, introducing the family of {\em accumulation tests}, which replace $-\log(1-p_i)$ with a generic {\em accumulation function} $h: [0,1] \rightarrow [0,\infty]$ satisfying \smash{$\int_{t=0}^1 h(t)dt=1$}. \citet{li2015accumulation} show that accumulation tests control a modified FDR criterion provided that the null $p$-values are $U[0,1]$, and are independent of each other and of the non-nulls.

\subsection{Ordered Testing of Random Hypotheses}\label{sec:random-hyp}

In our problem setting, the sequence of selected models is random; thus, the truth or falsehood of each $H_k$ is not fixed, as assumed in the previous section. Trivially, we can recover the guarantees from the fixed setting if we construct our $p$-values conditional on the entire path $M_{0:d}(Y)$; however, this option is rather unappealing for both computational and statistical reasons. In this section we discuss what sort of conditional control the single-step $p$-values must satisfy to recover each of the guarantees in Section~\ref{sec:orderedProposals}.

We note that conditioning on the current null model does guarantee that $p_k$ is uniform conditional on the event $\{F\in M_{k-1}(Y)\}$, the event that the $k$th null is true. Unfortunately, however, it does {\em not} guarantee that the stopping rules of Section~\ref{sec:orderedProposals} actually control FDR or FWER. Let $z_{\alpha/2}$ denote the upper-$\alpha/2$ quantile of the $\cN(0,1)$ distribution, and consider the following counterexample.

\begin{restatable}{proposition}{counterex}\label{prop:counterexample}
Consider linear regression with $n=p=3$, known $\sigma^2=1$ and identity design $X=I_3$, and suppose that we construct the model path as follows: if $|Y_3|>z_{\alpha/2}$, we add $X_1$ to the active set first, then $X_2$, then $X_3$. Otherwise, we add $X_2$, then $X_1$, then $X_3$. At each step we construct selective max-$z$ $p$-values and finally choose \smash{$\hk$} using BasicStop. 

If $\mu=(0,C,0)$, then the FWER for this procedure tends to $2\alpha-\alpha^2$ as $C\to\infty$.
\end{restatable}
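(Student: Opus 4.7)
The plan is to decompose the FWER event according to the randomized path choice and identify the different completion indices in each branch. Let $A=\{|Y_3|>z_{\alpha/2}\}$ and $B=A^c$; under $\mu=(0,C,0)$ the coordinate $Y_3\sim\cN(0,1)$ is independent of the others, so $\P(A)=\alpha$ and $\P(B)=1-\alpha$. On $A$ the path is $(\emptyset,\{1\},\{1,2\},\{1,2,3\})$ with completion index $k_0=2$; on $B$ it is $(\emptyset,\{2\},\{1,2\},\{1,2,3\})$ with $k_0=1$, since $M(\{2\})$ already contains $\mu$. Because BasicStop sets $\hk=\min\{k:p_k>\alpha\}-1$, a FWER event $\{\hk>k_0\}$ requires $p_1,p_2,p_3\leq\alpha$ on $A$ and $p_1,p_2\leq\alpha$ on $B$, so the FWER decomposes as $\P(p_{1:3}\leq\alpha, A)+\P(p_{1:2}\leq\alpha, B)$.

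Next I would verify that the ``signal-driven'' $p$-values vanish as $C\to\infty$. Since $|Y_2|$ is of order $C$ while every relevant null distribution has fixed tails, the step-$1$ max-$z$ statistic $\max(|Y_1|,|Y_2|,|Y_3|)$ diverges and $p_1\to 0$ in both cases. On $A$, the step-$2$ statistic $\max(|Y_2|,|Y_3|)$ also diverges, while its null conditional distribution under $M_1=M(\{1\})$ (standard $\cN(0,1)$ for $Y_2$, and a standard normal for $Y_3$ possibly truncated by the selection event $A$) is fixed, so $p_2\to 0$ as well.

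The heart of the argument is the two remaining terminal $p$-values. For $p_3$ on $A$: the event $\{M_2(Y)=M(\{1,2\})\}$ is the whole sample space because both paths arrive at the same $M_2$, so the selective max-$z$ $p$-value reduces to the unconditional two-sided $z$-test $p_3=2(1-\Phi(|Y_3|))$; on $A$ this is deterministically below $\alpha$. Thus $\P(p_{1:3}\leq\alpha, A)\to\P(A)=\alpha$. For $p_2$ on $B$: conditioning on $\{M_1(Y)=M(\{2\})\}$ is exactly conditioning on $B$, and since $M_1$ is a true null under $\mu$, $p_2$ is uniform given $B$, giving $\P(p_2\leq\alpha\mid B)=\alpha$ and $\P(p_{1:2}\leq\alpha, B)\to (1-\alpha)\alpha$. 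Summing yields FWER $\to\alpha+(1-\alpha)\alpha=2\alpha-\alpha^2$. The main subtlety, and the whole point of the counterexample, is that the step-$3$ $p$-value is oblivious to the selection event $A$ that spawned the path, because the selected null $M_2$ coincides across the two branches; conditioning only on the current null model is therefore not sufficient to rescue BasicStop.
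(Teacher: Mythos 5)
Your proof is correct and follows essentially the same route as the paper's: decompose the FWER over the two branches, observe that on the signal branch the diverging statistic kills the early $p$-values while $p_3=2(1-\Phi(|Y_3|))$ is automatically below $\alpha$ on $A$ (contributing $\P(A)=\alpha$), and on the other branch the exact selective $p_2$ is uniform given $B$ (contributing $(1-\alpha)\alpha$). The key observation you highlight --- that $M_2=M(\{1,2\})$ is nonrandom so $p_3$ cannot see the selection event --- is exactly the paper's central point.
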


A proof of Proposition~\ref{prop:counterexample} is given in the appendix.  The problem is that $k_0$ is no longer a fixed index. Consequently, even though $\cL(p_k \mid H_k \text{ true})=U[0,1]$ for each fixed $k$, the $p$-value \smash{$p_{k_0}$} corresponding to the {\em first} true null hypothesis is stochastically smaller than uniform. In this counterexample, $k_0=2 \iff |Y_3|>z_{\alpha/2}$, giving $\text{FWER}= 1$ conditional on the event $k_0=2$ and leaving no room for error when $k_0\neq 2$.

If, however, we could guarantee that for each $k=1,\ldots, d$,
\[
\P(p_k \leq \alpha \mid H_k \text{ true},\, k_0=k) \leq \alpha
\]
then BasicStop again would control FWER, by \eqref{eq:basic-stop-control}. Note that $k_0$ is unknown, so we cannot directly condition on its value when constructing $p$-values. However, because
\[
\{k_0=k\} = \{F\in M_{k-1}, F\notin M_{k-2}\},
\]
it is enough to condition on $(M_{k-2}, M_{k-1})$. As we will see in Section~\ref{sec:pValsIndep}, conditioning on $(M_{k-1}, U_{k-1})$ is equivalent to conditioning on $(M_{0:(k-1)}, U_{k-1})$ in most cases of interest including forward stepwise likelihood paths and ever-active regularization paths (but not the path in Proposition~\ref{prop:counterexample}).

Similarly, the error control guarantees of \citet{gsell2013sequential} and \citet{li2015accumulation} do not directly apply to case where the null hypotheses are random. However, we recover these guarantees if we have {\em conditional} independence and uniformity of null $p$-values given $k_0$: that is, if for all $k=0,\ldots, d-1$ and $\alpha_{k+1},\ldots,\alpha_{d} \in [0,1]$, we have
\begin{equation}\label{eq:indepCond_random_k0}
  \P_F(p_{k+1} \leq \alpha_{k+1}, \ldots, p_d \leq \alpha_d
  \mid p_1, \ldots, p_k) \eqAS \left(\prod_{i=k+1}^d \alpha_i\right) \quad \text{on } \{k_0(Y, F) = k\},
\end{equation}
For the sake of brevity, we will say that $p$-value sequences satisfying~\eqref{eq:indepCond_random_k0} are {\em independent on nulls}.

The following proposition shows that independence on nulls allows us to transfer the error-control guarantees of ForwardStop and accumulation tests to the random-hypothesis case.
\begin{proposition}
  Let $\hk$ be a stopping rule operating on $p$-values $p_{1:d}(Y)$ for nested hypotheses $H_{1:d}(Y)$. For some function $g$, let
  \[
  \Err_g = \E\left[ g\left(\hk(p_{1:d}(Y)), k_0(Y, F)\right)\right],
  \]
  Suppose that \smash{$\hk$} controls $\Err_g$ at level $\alpha$ if $H_{1:d}$ are fixed and the null $p$-values are uniform and independent of each other and the non-nulls. Then, \smash{$\hk$} controls $\Err_g$ at level $\alpha$ whenever $p_{1:d}$ satisfy~\eqref{eq:indepCond_random_k0}.
\end{proposition}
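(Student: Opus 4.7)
The plan is to reduce to the fixed-hypothesis setting by conditioning on the (random) value of the completion index $k_0$. Write
\[
\Err_g \;=\; \sum_{k=0}^{d}\,\P_F(k_0 = k)\;\E_F\bigl[\,g(\hk,\,k_0)\,\big|\,k_0 = k\,\bigr];
\]
it then suffices to bound each conditional expectation on the right by $\alpha$.

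Fix $k$ and work on the event $\{k_0=k\}$. Hypothesis~(5) says that on this event, the conditional c.d.f.\ of $(p_{k+1},\ldots,p_d)$ given $(p_1,\ldots,p_k)$ is the independent-uniform product c.d.f.\ on $[0,1]^{d-k}$. Because the right-hand side does not depend on $(p_1,\ldots,p_k)$, the conditional law of $(p_1,\ldots,p_d)$ given $\{k_0=k\}$ has the following structure: $(p_1,\ldots,p_k)$ follow some (unspecified) joint distribution on $[0,1]^k$, and $(p_{k+1},\ldots,p_d)$ are i.i.d.\ $U[0,1]$ and independent of $(p_1,\ldots,p_k)$. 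This is precisely the distributional setting for which the fixed-hypothesis guarantee is assumed to hold when $H_1,\ldots,H_k$ are non-null and $H_{k+1},\ldots,H_d$ are null. Applying that guarantee to the conditional law yields
\[
\E_F\bigl[\,g(\hk(p_{1:d}),\,k)\,\big|\,k_0=k\,\bigr]\;\leq\;\alpha,
\]
which, since $k_0=k$ on this event, is the same as $\E_F[g(\hk,k_0)\mid k_0=k]\leq\alpha$. Summing over $k$ and weighting by $\P_F(k_0=k)$ gives $\Err_g\leq\alpha$.

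The main (and essentially only) obstacle is measure-theoretic: the event $\{k_0=k\}$ need not be $\sigma(p_1,\ldots,p_k)$-measurable, so the a.s.\ identity ``on $\{k_0=k\}$'' in~(5) does not \emph{formally} assert a factorization of the law conditional on $\{k_0=k\}$. This can be handled by integrating~(5) against $\mathbb{1}_{\{k_0=k\}}\,h(p_1,\ldots,p_k)$ for arbitrary bounded measurable $h$: the product $\prod_{i=k+1}^{d}\alpha_i$ is a constant and factors out of the expectation, and then dividing by $\P_F(k_0=k)$ produces the independent-uniform product form for the last $d-k$ coordinates under the law conditional on $\{k_0=k\}$, which is what the fixed-hypothesis guarantee requires.
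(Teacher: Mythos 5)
Your proof is correct and follows essentially the same route as the paper: both reduce to the fixed-hypothesis guarantee by conditioning on the completion index, using the fact that the guarantee holds for an arbitrary distribution of the non-null $p$-values (the paper conditions additionally on $p_{1:k_0}$ and realizes the non-nulls as point masses, while you leave their conditional joint law unspecified, which amounts to the same thing). The measure-theoretic caveat you raise about $\{k_0=k\}$ not being $\sigma(p_{1:k})$-measurable is a genuine imprecision in the statement of~\eqref{eq:indepCond_random_k0}, but it is shared by the paper's own proof and is resolved by reading the conditioning as being on the sufficient filtration $\sF_k$, with respect to which $\{k_0=k\}$ is measurable.
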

\begin{proof}
    For nested hypotheses, $k_0$ completely determines the truth or falsehood of $H_k(Y)$ for every $k$. If \smash{$\hk$} controls $\Err_g$ in the 
fixed-hypothesis case, it must in particular control $\Err_g$ conditional on any fixed values of $p_{1:k_0}$, since we could set 
\[
p_{1:k_0} \sim \prod_{k=1}^{k_0} \delta_{a_k}
\]
for any sequence $a_1,\ldots, a_{k_0}$, where $\delta_a$ is a point mass at $a \in [0,1]$.

Thus, \eqref{eq:indepCond_random_k0} implies
  \[
  \E\left[ g\left(\hk(p_1,\ldots,p_d), k_0(Y, F)\right) \;\mid\;
      k_0(Y, F), \; p_{1:k_0}(Y)\right] \leqAS \alpha.
  \]
  Marginalizing over $(p_{1:k_0},k_0)$ gives $\Err_g\leq \alpha$.
\end{proof}

Note that~\eqref{eq:indepCond_random_k0} implies in particular that each $p_k$ is uniform given $k_0=k$; thus, independence on nulls is enough to guarantee that BasicStop and ForwardStop control FWER and FDR, respectively. The next section discusses conditions on the model sequence $M_{0:d}$ and the $p$-value sequence $p_{1:d}$ under which~\eqref{eq:indepCond_random_k0} is satisfied.

\section{Conditions for Independent $p$-values}\label{sec:pValsIndep}

We now develop sufficient conditions for constructing $p$-values with the {\it independent on nulls} property (\ref{eq:indepCond_random_k0}). To begin, we motivate the general theory by discussing a specific case, the max-$t$ test for forward stepwise regression.

\subsection{Independence of max-$t$ $p$-values}

Recall that at step $k$, the max-$t$ test rejects for large \smash{$T_k=|t_{E_{k-1}}^*|$}, comparing its distribution to the conditional law
\begin{equation}\label{eq:maxTnull}
\cL\left(T_k \; \mid \; E_{k-1}, \,X_{E_{k-1}}'Y, \,\|Y\|^2 \right).
\end{equation}
This conditional distribution is the same for all $F\in M(E_{k-1})$, because \smash{$U_{k-1}=(X_{E_{k-1}}'Y,\; \|Y\|^2)$} is the complete sufficient statistic for $M(E_{k-1})$.

If $F\in M(E_{k-1})$, then $p_k$ is uniform and independent of the previous $p$-values $p_{1:(k-1)}$ since:
\begin{enumerate}
\item $p_k$ is uniform conditional on $E_{k-1}$ and $U_{k-1}$ by construction, and
\item $p_1, \ldots, p_{k-1}$ are functions of $E_{k-1}$ and $U_{k-1}$, as we will see shortly.
\end{enumerate}
Informally, the pair $(E_{k-1}, U_{k-1})$ forms a ``wall of separation'' between $p_k$ and $p_{1:(k-1)}$, guaranteeing that $\cL(p_k \mid p_{1:(k-1)}) = U[0,1]$ whenever $H_k$ is true.

Next we will see why $p_1,\ldots,p_{k-1}$ are functions of $(E_{k-1},U_{k-1})$. Observe that knowing $E_{k-1}=E$ tells us that the $k-1$ variables in $E$ are selected first, and knowing \smash{$U_{k-1}=(X_{E_{k-1}}'Y,\|Y\|^2)$} is enough information to compute the $t$-statistics $t_{j,D}$ for all $j\in E$ and $D\sub E$. As a result, we can reconstruct the order in which those $k-1$ variables were added. In other words, $(E_i, U_i)$ is a function of $(E_{k-1}, U_{k-1})$ for all $i<k$.

Furthermore, for $i<k$, $p_i$ is computed by comparing $T_i=|t_{j_i, E_{i-1}}|$, which is a function of $(E_i, U_i)$, to the reference null distribution $\cL_{H_i}(T_i \mid E_{i-1}, U_{i-1})$, which is a function of $(E_{i-1}, U_{i-1})$. Having verified that under $H_k$, $p_k$ is uniform and independent of $p_{1:{k-1}}$, we can apply this conclusion iteratively to see that all remaining $p$-values are also uniform and independent.

By contrast, the saturated model $p$-values are computed using a reference distribution different from~\eqref{eq:maxTnull}, one that depends on
information not contained in $(E_k,U_k)$. As a result, saturated-model $p$-values are generally not independent on nulls. We discuss the regression setting in more detail in Section \ref{sec:selective-reg}.

\subsection{General Case}

We now present sufficient conditions for independence on nulls generalizing the above ideas, culminating in Theorem~\ref{thm:suffCond} at the end of this section.

Define the {\em sufficient filtration} for the path $M_{0:d}$ as the filtration $\sF_{0:d}$ with 
\[
\sF_k = \sF(M_{0:k}, U_k),
\]
where $\sF(Z)$ denotes the $\sigma$-algebra generated by random variable $Z$. By our assumption of minimal sufficiency, $\sF_i \sub \sF_k$ for $i\leq k$.

For most path algorithms of interest, including forward stepwise and regularized likelihood paths, observing $(M_{k-1}, U_{k-1})$ is equivalent to observing $(M_{0:k-1},U_{k-1})$, because knowing $(M_{k-1}, U_{k-1})$ is enough to  reconstruct the subpath $M_{0:k-1}$. We say that $M_{0:d}$ satisfies the {\em subpath sufficiency principle} (henceforth SSP) if \begin{equation}\label{eq:SSP}
  \sF(M_{0:k}, U_k) = \sF(M_k, U_k), \quad k=0,\ldots, d.
\end{equation}

A valid {\em selective $p$-value} $p_k$ for testing $H_k:\; F \in M_{k-1}(Y)$ satisfies, for any $F\in M_{\infty}$,
\[
\P_F(p_k \leq \alpha \mid M_{k-1}) \leqAS \alpha \quad \text{ on } \{F\in M_{k-1}\}.
\]
We say that a filtration $\sF_{0:d}$ {\em separates} the $p$-values $p_{1:d}$ if
\begin{enumerate}
\item $p_k(Y)$ is super-uniform given $\sF_{k-1}$ on the event $\{F\in M_{k-1}(Y)\}$, and
\item $M_k(Y)$ and $p_k(Y)$ are measurable with respect to $\sF_k$.
\end{enumerate}

If we think of $\sF_k$ as representing information available at step $k$, then the first condition means that $p_k$ ``excludes'' whatever evidence may have accrued against the null by step $k-1$, and the second means that any information revealed after step $k$ is likewise irrelevant to determining $p_k$. Separated $p$-values are independent on nulls, as we see next.

\begin{proposition}[Independence of 
  Separated $p$-Values]\label{prop:jointConserv}

  Let $p_{1:d}$ be selective $p$-values for $M_{0:d}$, 
  separated by $\sF_{0:d}$.

  If the $p$-values are exact then $p_{k+1}, \ldots, p_d$ are
  independent and uniform given $\sF_k$ on the event $\{k_0=k\}$.
  If they are super-uniform, then for all
  $\alpha_{k+1},\ldots,\alpha_d \in [0,1]$,
  \begin{equation}\label{eq:jointSuper}
  \P_F\left(p_{k+1}\leq \alpha_{k+1}, \ldots, p_d \leq \alpha_d, \;
    \mid\; \sF_k\right) \leqAS \left(\prod_{i=k+1}^d
  \alpha_i\right) \quad \text{ on } \{k_0 = k\}.
  \end{equation}
\end{proposition}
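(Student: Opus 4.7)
My approach will be a backward peeling argument, proved by induction on the number of $p$-values, with the two separation conditions each playing a distinct role: condition 2 lets me pull $p_1,\ldots,p_m$ and the indicator of $\{k_0=k\}$ out of conditional expectations given $\sF_m$, while condition 1 supplies the bound on $\P(p_{m+1}\leq \alpha_{m+1}\mid \sF_m)$ that drives the induction.

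Before starting the induction, I would first record the measurability fact that $\{k_0=k\}$ is $\sF_k$-measurable (and hence $\sF_m$-measurable for every $m\geq k$). This follows from separation condition 2, which makes each $M_j$ an $\sF_j$-measurable random object: since $F$ is a fixed (non-random) distribution, $\{F\in M_j\}$ is $\sF_j$-measurable, and $\{k_0=k\}=\{F\in M_k\}\setminus\{F\in M_{k-1}\}$ inherits $\sF_k$-measurability by filtration nesting. The same condition makes $\mathbf{1}\{p_j\leq\alpha_j\}$ an $\sF_j$-measurable random variable, which I will use repeatedly.

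The induction, on $m\in\{k,k+1,\ldots,d\}$, establishes the slightly stronger claim
\[
\E\!\left[\mathbf{1}\{k_0=k\}\prod_{j=k+1}^{m}\mathbf{1}\{p_j\leq\alpha_j\}\,\Big|\,\sF_k\right]\leqAS \mathbf{1}\{k_0=k\}\prod_{j=k+1}^{m}\alpha_j
\]
for arbitrary $\alpha_{k+1},\ldots,\alpha_m\in[0,1]$. The base case $m=k$ is the tautology $\mathbf{1}\{k_0=k\}=\mathbf{1}\{k_0=k\}$ (the product is empty). For the inductive step $m\to m+1$, I would insert $\sF_m$ via the tower property: all factors except $\mathbf{1}\{p_{m+1}\leq\alpha_{m+1}\}$ are $\sF_m$-measurable and so extract, leaving $\E[\mathbf{1}\{p_{m+1}\leq\alpha_{m+1}\}\mid \sF_m]$ on the inside. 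Because $\{k_0=k\}\sub\{F\in M_m\}$, separation condition 1 bounds this inner factor by $\alpha_{m+1}$ almost surely on $\{k_0=k\}$; factoring out $\alpha_{m+1}$ and invoking the inductive hypothesis then completes the step.

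Setting $m=d$ and reading the paper's ``on $\{k_0=k\}$'' as multiplication by $\mathbf{1}\{k_0=k\}$ yields the super-uniform bound \eqref{eq:jointSuper}. In the exact case, condition 1 holds with equality, so every $\leqAS$ in the induction becomes an almost-sure equality; the resulting identity exhibits the conditional joint CDF of $(p_{k+1},\ldots,p_d)$ given $\sF_k$, on $\{k_0=k\}$, as $\prod_{i=k+1}^d\alpha_i$, which is the joint CDF of $d-k$ independent uniform random variables, giving independence and uniformity. The only part demanding any care is the measurability bookkeeping in the opening paragraph; everything else is routine tower-property manipulation.
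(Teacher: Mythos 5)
Your proof is correct and takes essentially the same route as the paper's: both are tower-property peeling inductions in which separation condition 2 supplies the measurability needed to extract the earlier indicators and the event $\{k_0=k\}$, and condition 1 supplies the single-step bound $\E[\mathbf{1}\{p_{m+1}\leq\alpha_{m+1}\}\mid\sF_m]\leq\alpha_{m+1}$ on $\{F\in M_m\}\supseteq\{k_0=k\}$. The only differences are cosmetic — you run a forward induction on the ending index and peel the last $p$-value, while the paper runs a backward induction on the conditioning index and peels the first (working with $\{k_0\leq k\}$ instead of $\{k_0=k\}$) — and you spell out the $\sF_k$-measurability of $\{k_0=k\}$, which the paper merely asserts.
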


\begin{proof}
  Noting that $\{k_0=k\}$ is $\F_k$-measurable, it is enough to show that 
  \begin{equation}\label{eq:jointSuper2}
  \P_F\left(p_{k+1}\leq \alpha_{k+1}, \ldots, p_d \leq \alpha_d, \;
    \mid\; \sF_k\right)\1_{\{k_0\leq k\}} \leqAS \left(\prod_{i=k+1}^d
  \alpha_i\right)\1_{\{k_0\leq k\}}
  \end{equation}
  We now prove~\eqref{eq:jointSuper2} by induction. Define $B_i = \{p_i \leq \alpha_i\}$. The base case is
  \[
  \P_F\left(B_d \mid \sF_{d-1}\right)1_{\{k_0 \leq d-1\}} \leqAS \alpha_d
1_{\{k_0 \leq d-1\}},
  \]
  which is true by construction of $p_d$. 
  For the inductive case, note that 
  \begin{align*}
    \P_F\left(B_{k+1}, \ldots, B_d
      \mid \sF_k\right)1_{\{k_0 \leq k\}} 
    &\eqAS \E_F\bigg[ 1_{B_{k+1}} 
    \P_F\left(B_{k+2}, \ldots, B_d
      \mid \sF_{k+1}\right)1_{\{k_0 \leq k+1\}}
    \mid \sF_k\bigg]1_{\{k_0 \leq k\}}\\
    &\leqAS \P_F\left[ B_{k+1}
      \mid \sF_k\right]1_{\{k_0 \leq k\}}\prod_{i=k+2}^d \alpha_i \\
    &\leqAS \left(\prod_{i=k+1}^d \alpha_i \right)1_{\{k_0 \leq k\}} .
  \end{align*}
Lastly, if the $p_k$ are exact then the inequalities above become equalities, implying uniformity and mutual independence.
\end{proof}

The sufficient filtration separates $p_{1:d}$ if and only if (1) $p_k$ is super-uniform given $M_{0:(k-1)}$ and $U_{k-1}$, and (2) $p_k$ is a function of $M_{0:k}$ and $U_k$.

To be a valid selective $p$-value for a single step, $p_k$ only needs to be super-uniform conditional on $M_{k-1}$. It may appear more stringent to additionally require that $p_k$ must condition on the entire subpath $M_{0:(k-1)}$ as well as $U_{k-1}$, but in practice there is often no difference: if $M_{0:d}$ satisfies the SSP and each $U_{k-1}$ is a complete sufficient statistic, then {\em every} exact selective $p$-value $p_k$ is also uniform conditional on $\sF_{k-1}$.

The requirement that $p_k$ must be $\sF_k$-measurable has more bite. For example, we will see in Section~\ref{sec:selective-reg} that it excludes saturated-model tests in linear regression.

Collecting together the threads of this section, we arrive at our main result: a sufficient condition for $p_{1:d}$ to be independent on nulls.
\begin{theorem}[Sufficient Condition for Independence on Nulls] \label{thm:suffCond}
Assume that $U(Y; M)$ is a complete sufficient statistic for each candidate model $M$, that each $p_k$ is an exact selective $p$-value for $M_{k-1}$, and that the path algorithm $M_{0:d}(\cdot)$ satisfies the SSP.

If $p_k$ is $\sF_k$-measurable for each $k$, then the $p$-value sequence $p_{1:d}(Y)$ is independent on nulls.
\end{theorem}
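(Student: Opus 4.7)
The plan is to verify that the sufficient filtration $\sF_{0:d}$ \emph{separates} the $p$-value sequence $p_{1:d}$ in the sense required by Proposition~\ref{prop:jointConserv}, then invoke that proposition and pass the conditioning from $\sF_k$ down to $\sigma(p_{1:k})$ via the tower property.

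By the SSP, $\sF_k = \sF(M_k, U_k)$, so $M_k$ is trivially $\sF_k$-measurable and $p_k$ is $\sF_k$-measurable by hypothesis; this gives the second requirement of separation. The main step is to show exact uniformity of $p_k$ conditional on $\sF_{k-1}$ on $\{F \in M_{k-1}\}$. Fix a candidate null model $M$ and work on the event $\{M_{k-1} = M\}$. By sufficiency of $U_{k-1}$ for $M$, the conditional law $\cL_F(Y \mid M_{k-1} = M, U_{k-1})$ does not depend on $F \in M$, so neither does $h(\alpha, U_{k-1}) := \P_F(p_k \leq \alpha \mid M_{k-1} = M, U_{k-1})$. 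Because $p_k$ is an \emph{exact} selective $p$-value, $\P_F(p_k \leq \alpha \mid M_{k-1} = M) = \alpha$ for every $F \in M$, so by the tower property
\[
\E_F\bigl[\, h(\alpha, U_{k-1}) - \alpha \,\bigm|\, M_{k-1} = M \,\bigr] = 0 \quad \text{for every } F \in M.
\]
Completeness of $U_{k-1}$ under the conditional family on $\{M_{k-1} = M\}$ then forces $h(\alpha, U_{k-1}) \eqAS \alpha$, and taking a union over the (countably many) candidate $M$ yields the first requirement of separation.

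Proposition~\ref{prop:jointConserv} now yields $\P_F(p_{k+1} \leq \alpha_{k+1}, \ldots, p_d \leq \alpha_d \mid \sF_k) \eqAS \prod_{i=k+1}^d \alpha_i$ on $\{k_0 = k\}$. Each $p_i$ with $i \leq k$ is $\sF_i$-measurable and hence $\sF_k$-measurable, so $\sigma(p_{1:k}) \sub \sF_k$; moreover $\{k_0 = k\}$ is $\sF_k$-measurable since each $M_i$ is. Writing $A$ for the joint event $\{p_{k+1} \leq \alpha_{k+1}, \ldots, p_d \leq \alpha_d\}$, then for any $B \in \sigma(p_{1:k})$, conditioning the proposition's identity on the coarser algebra $\sigma(p_{1:k})$ via the tower property gives
\[
\P_F(A \cap B \cap \{k_0 = k\}) = \Bigl(\prod_{i=k+1}^d \alpha_i\Bigr)\, \P_F(B \cap \{k_0 = k\}),
\]
which is precisely \eqref{eq:indepCond_random_k0}. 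The principal obstacle is the completeness step above: lifting selective validity of $p_k$ from conditioning on $M_{k-1}$ alone to conditioning on the full pair $(M_{k-1}, U_{k-1})$ without losing exact uniformity. This is exactly where completeness of $U(Y;M)$ and \emph{exactness} (rather than merely super-uniformity) of the selective $p$-values both do essential work; with only super-uniformity, one cannot in general propagate the marginal inequality to the strengthened conditioning.
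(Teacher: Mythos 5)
Your proof is correct and follows essentially the same route as the paper's: apply completeness of $U_{k-1}$ to the function $\alpha - \P(p_k \le \alpha \mid U_{k-1}, M_{k-1}=M)$ to upgrade exact marginal validity to exact uniformity given $\sF_{k-1}$, verify that the sufficient filtration separates $p_{1:d}$, and invoke Proposition~\ref{prop:jointConserv}. Your explicit tower-property step passing from conditioning on $\sF_k$ down to $\sigma(p_{1:k})$ is a detail the paper leaves implicit, and it is handled correctly.
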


\begin{proof}
We apply the definition of completeness to the function 
\[
h(U(Y; \, M)) \;=\; 
\alpha - \P\bigg(p_k(Y) \leq \alpha \mid U(Y; \, M), \, M_{k-1}(Y) = M\bigg)
\]
If $p_k$ is exact given $M_{k-1} = M$, then 
\[
\E_F[h(U_{k-1}) \mid M_{k-1}=M] \;=\; 0
\] 
for every $F\in M$. Therefore, we must have $h(U_{k-1})\eqAS 0$, so $p_k$ is independent of $\sF(M_k, U_{k-1})$. Because $M_{0:d}$ satisfies the SSP, $p_k$ is also independent of $\sF_{k-1} = \sF(M_{0:(k-1)}, U_{k-1})$.

If $p_k$ is also $\sF_k$-measurable, then the sufficient filtration separates $p_{1:d}$, implying that the sequence $p_{1:d}$ is independent on nulls.
\end{proof}

\paragraph{Remark} If our path algorithm does not satisfy the SSP, we can repair the situation by constructing $p$-values that are uniform conditional on $(M_{0:k-1},U_{k-1})$.

\subsection{Independence for Greedy Likelihood and Next-Entry $p$-values}\label{sec:indep-greedy-entry}

In this section, we apply Theorem~\ref{thm:suffCond} to establish that in the generic sparse parametric model of Section~\ref{sec:genSparse}, the forward stepwise path and all ever-active regularization paths satisfy the SSP. Moreover, the greedy likelihood ratio statistic $G_k$ and the next-entry statistic $\lambda_k$ are $\sF_k$-measurable with respect to the sufficient filtrations of the forward stepwise and ever-active regularization paths, respectively. As a result, the $p$-value sequences in each setting are independent on nulls, per Theorem~\ref{thm:suffCond}.

We begin by proving both paths satisfy the SSP:
\begin{proposition}\label{prop:forwardSSP}
Forward stepwise paths and ever-active regularization paths both satisfy the subpath sufficiency principle.
\end{proposition}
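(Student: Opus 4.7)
The plan is to establish, for both path types, that the sub-path $M_{0:(k-1)}$ is a measurable function of $(M_k, U_k)$; the reverse inclusion in the SSP is automatic. I would work conditionally on the event $\{M_k = M(E)\}$ for an arbitrary active set $E$. On this event, $U_k = U(Y; M(E))$, so by minimal sufficiency the map $\theta \mapsto \ell(\theta; Y)$ on $\Theta(E)$ is determined by $(M_k, U_k)$ up to an additive constant independent of $\theta$. Hence every profile likelihood $\sup_{\theta \in D}\ell(\theta; Y)$ with $D \subseteq \Theta(E)$, and every penalized minimizer $\argmin_{\theta \in \Theta(E)}\{-\ell(\theta;Y)+P_\gamma(\theta)\}$, is a function of $(M_k, U_k)$. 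All the reconstruction steps below stay inside $\Theta(E)$, so this is the only sufficiency fact I need.

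For the forward stepwise path, I would reconstruct $E_1, \ldots, E_{k-1}$ by induction starting from the fixed set $E_0$. Given $E_{i-1}$, the idea is that the true $j_i$ must lie in $E \setminus E_{i-1}$, because every selected variable ends up in $E = E_k$. The algorithm picks $j_i$ to maximize $\sup_{\theta \in \Theta(E_{i-1}\cup\{j\})}\ell(\theta; Y)$ over $j \notin E_{i-1}$, and restricting the competition to the sub-collection $E \setminus E_{i-1}$---which still contains the winner---cannot change the argmax. Since each candidate space $\Theta(E_{i-1} \cup \{j\})$ with $j \in E$ lies inside $\Theta(E)$, the restricted argmax is computable from $(M_k, U_k)$, and setting $E_i = E_{i-1} \cup \{j_i\}$ closes the induction.

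For the ever-active regularization path, the key observation is that on $\{M_k = M(E)\}$, $\supp(\hat\theta^\gamma) \subseteq E$ for every $\gamma \in \Lambda$ with $\gamma \geq \lambda_k$, directly by the definition of the ever-active set as a union of supports. Uniqueness of the minimizer in \eqref{eq:pen-lik} then forces $\hat\theta^\gamma$ to coincide with the restricted minimizer $\tilde\theta^{\gamma, E} := \argmin_{\theta \in \Theta(E)}\{-\ell(\theta;Y)+P_\gamma(\theta)\}$, which is a function of $(M_k, U_k)$. I would then form the restricted ever-active path $\tE^{(E)}_\lambda := \bigcup_{\gamma \in \Lambda,\, \gamma \geq \lambda}\supp(\tilde\theta^{\gamma, E}) \subseteq E$ and argue that on our event it agrees with $\tE_\lambda$ throughout $\lambda \geq \lambda_k$. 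Its breakpoints and plateaus in that range therefore reproduce $\lambda_0 > \cdots > \lambda_k$ and $E_0, \ldots, E_{k-1}$ exactly, delivering the sub-path from $(M_k, U_k)$.

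The step I expect to be the main obstacle is identifiability in the ever-active case: because the restricted path may genuinely disagree with the truth for $\lambda < \lambda_k$, I need to pinpoint at what $\lambda$ the true and restricted paths part ways without knowing $\lambda_k$ in advance. My plan is to use that $\tE^{(E)}_\lambda$ is monotone and bounded above by $E$, so it saturates at $E$ as $\lambda$ decreases; under the conditioning event this saturation first occurs exactly at $\lambda_k$ (since just above $\lambda_k$ the restricted path still coincides with the true path $\tE_\lambda = E_{k-1} \subsetneq E$). All breakpoints strictly above the saturation value are then genuine $\lambda_i$'s, from which $E_{0:(k-1)}$ is read off. Some additional care will be needed to handle a continuous $\Lambda$ and the precise behaviour of suprema at $\lambda_k$, but the standing uniqueness assumption on $\hat\theta^\lambda$ keeps the step-function structure of the path clean enough for this to go through.
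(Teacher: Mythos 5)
Your proof is correct and follows essentially the same route as the paper's: condition on the event $\{M_k = M(E)\}$, use sufficiency to reduce the likelihood restricted to $\Theta(E)$ (and hence every profile likelihood and penalized minimizer over subsets of $\Theta(E)$) to a function of $(M_k, U_k)$, and then reconstruct the subpath --- by induction on the selected variables for forward stepwise, and via agreement of the restricted and true solution paths for $\lambda \geq \lambda_k$ in the regularized case. The one place you go beyond the paper is in explicitly handling the identifiability of $\lambda_k$ through the saturation of the restricted ever-active path at $E$, a point the paper's proof leaves implicit; this is a refinement of the same argument rather than a different approach.
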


\begin{proof}
First define the restricted log-likelihood
\begin{equation*}
\ell_E(\theta; Y) = \left\{\begin{matrix} 
    \ell(\theta; \;Y) & \theta \in \Theta(E)\\ 
    -\infty  & \mathrm{ otherwise.}\end{matrix}\right. 
\end{equation*}
For some fixed step $k$ and active set $E$, denote the event $A=\{M_k = M(E)\}$. Conditioning on $U$, the restricted likelihood is proportional to a function depending only on $U = U(Y; \;M(E))$. The log-likelihood decomposes as
\[
\ell_E(\theta; Y) = \ell_E^{U}(\theta; U) 
+ \ell_E^{Y \mid U}(Y \mid U).
\]
The second term, the log-likelihood of $Y$ given $U$, does not depend on $\theta$ because $U$ is sufficient.

Recall the forward stepwise path is defined by
  \begin{equation}\label{eq:forwardStepDef}
    j_k = \argmax_j \;\sup \left\{\ell(\theta; Y):\; \theta\in\Theta(E_{k-1} \cup \{j\})\right\}, \quad \text{ and } \quad
    E_k = E_{k-1} \cup \{j_k\}.
  \end{equation}

  On $A$, we have $E_s(Y) \sub E$ for all $s\leq k$, meaning that the maximum in~\eqref{eq:forwardStepDef} is attained by some $j\in E$ at every step. So, for $s \leq k$, we have
\begin{align*}
  j_s &= \argmax_{j\in E} \;\sup \left\{\ell_E(\theta; Y):\;
    \theta\in\Theta(E_{s-1} \cup \{j\})\right\} \\
  &= \argmax_{j\in E} \;\sup \left\{\ell_E^U(\theta; U(Y)):\;
    \theta\in\Theta(E_{s-1} \cup \{j\})\right\}. 
\end{align*}
The above shows that $j_1,\ldots, j_{k-1}$ all depend on $Y$ only through $U(Y)$, which equals $U_k(Y)$ on $A$. As a result, it also follows that the entire sequence $M_{0:k}(Y)$ depends only on $(M_k, U_k)$.

As for ever-active regularization paths, if we denote 
\[
\hat\theta^{(E,\lambda)} = \argmin_{\theta\in\Theta(E)} \; -\ell_E^{U}(\theta; U(Y)) + P_\lambda(\theta),
\]
then 
\[
\hat\theta^{(E,\lambda)} \eqAS \hat\theta^{\lambda} \text{ on } 
A \cap 1_{\{\lambda \geq \lambda_k\}}.
\]
But \smash{$\hat\theta^{(E,\lambda)}$} depends only on $U(Y)$. Therefore, on $A$, we can reconstruct the entire path of solutions for $\{\lambda\in \Lambda:\; \lambda\geq \lambda_k\}$, once we know $(M_k,\,U_k)$.
\end{proof}

As a direct consequence of Proposition~\ref{prop:forwardSSP}, greedy likelihood $p$-values for the forward-stepwise path, and next-entry $p$-values for ever-active regularization paths, are independent on nulls.

\begin{corollary}\label{cor:greedyLikIndep}
  If $U(Y; M)$ is complete and sufficient for model $M$, then the selective greedy likelihood $p$-values computed for the forward stepwise path are independent on nulls.
\end{corollary}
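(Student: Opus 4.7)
The plan is to apply Theorem~\ref{thm:suffCond}, which gives independence on nulls once four hypotheses are verified for the forward-stepwise/greedy-likelihood setup. Three of these are essentially free: complete sufficiency of $U(Y;M)$ is the standing hypothesis of the corollary; the SSP for the forward stepwise path is exactly Proposition~\ref{prop:forwardSSP}; and exactness of $p_k$ as a selective $p$-value for $M_{k-1}$ follows by construction, since $p_k$ is defined as a conditional tail probability of $G_k$ under the law $\cL(G_k \mid M_{k-1}, U_{k-1})$, which by sufficiency is the same for every $F\in M_{k-1}$. The substantive remaining task is to verify the fourth hypothesis: $p_k$ is $\sF_k$-measurable, where $\sF_k = \sF(M_{0:k}, U_k)$.

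I would argue this in two steps. First, I show that the statistic $G_k$ itself is $\sF_k$-measurable. On the event $\{M_k = M(E)\}$, we have $M_{k-1}=M(E')$ for some $E'\sub E$, and both suprema in
\[
G_k \;=\; \sup_{\theta\in\Theta(E_k)} \ell(\theta;Y) \;-\; \sup_{\theta\in\Theta(E_{k-1})} \ell(\theta;Y)
\]
range over subsets of $\Theta(E)$. Using the sufficiency decomposition $\ell_E(\theta;Y) = \ell_E^U(\theta; U) + \ell_E^{Y\mid U}(Y\mid U)$ exactly as in the proof of Proposition~\ref{prop:forwardSSP}, the $\theta$-free term cancels in the difference, and each supremum reduces to an optimization of $\ell_E^U(\theta; U_k)$ over a parameter set determined by $M_{0:k}$. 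Hence $G_k$ is a function of $(M_{0:k}, U_k)$. Second, the reference distribution $\cL(G_k \mid M_{k-1}, U_{k-1})$ used to compute $p_k$ is a function of $(M_{k-1}, U_{k-1})$, and so is $\sF_{k-1}$- and hence $\sF_k$-measurable. Since $p_k$ is obtained by evaluating the tail of this conditional distribution at the observed $G_k$, it is a measurable function of $\sF_k$-measurable inputs and therefore itself $\sF_k$-measurable.

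With all four hypotheses of Theorem~\ref{thm:suffCond} in place, the conclusion follows immediately. The only real subtlety — and it is minor — is justifying the sufficiency decomposition uniformly across the two suprema appearing in $G_k$; this works because, on the conditioning event, both $\Theta(E_{k-1})$ and $\Theta(E_k)$ sit inside $\Theta(E_k) = \Theta(E)$, so the decomposition with respect to $U_k$ is valid throughout and the conditional term (independent of $\theta$) drops out of the difference. A secondary housekeeping point is that strict exactness of $p_k$ requires the conditional law of $G_k$ given $(M_{k-1}, U_{k-1})$ to be continuous (or the use of a standard randomization tie-breaker); neither affects the measurability argument, so I would simply note it in passing.
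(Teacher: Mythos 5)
Your proposal is correct and follows essentially the same route as the paper's proof: invoke Theorem~\ref{thm:suffCond}, reduce the work to checking that $p_k$ is $\sF_k$-measurable, establish this for $G_k$ by rewriting both suprema via the sufficiency decomposition over $\Theta(E_k)$ so that only a function of $(M_{0:k},U_k)$ remains, and observe that the reference law $\cL(G_k \mid M_{k-1},U_{k-1})$ is $\sF_{k-1}$-measurable. Your added remarks on exactness and on the cancellation of the $\theta$-free term are careful elaborations of steps the paper leaves implicit, not a different argument.
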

\begin{proof}
  Per Theorem~\ref{thm:suffCond}, it is enough to show that $p_k$ is $\sF_k$-measurable with respect to the sufficient filtration for the forward stepwise path. First, the test statistic $G_k$ is $\sF_k$-measurable because
  \begin{align*}
    G_k(Y) &= \sup_{\theta\in \Theta(E_k)} {\ell(\theta; Y)} \;\;- \sup_{\theta\in \Theta(E_{k-1})} {\ell(\theta; Y)}\\
    &= \sup_{\theta\in \Theta(E_k)} {\ell_{E_k}^{Y\mid U}(\theta; U_k)} \;\;- \sup_{\theta\in \Theta(E_{k-1})} {\ell_{E_k}^{Y\mid U}(\theta; U_k)}.
  \end{align*}
  Second, the null distribution $\cL\left(G_k \,\mid\, M_{k-1}, U_{k-1}\right)$ against which we compare $G_k$ is also $\sF_k$-measurable because it depends only on $(M_{k-1}, U_{k-1})$, both of which are $\sF_{k-1}$-measurable.
\end{proof}

\begin{corollary}\label{cor:nextLambdaIndep}
  If $U(Y; M)$ is complete and sufficient for model $M$, then the selective next-entry $p$-values computed for any ever-active regularized likelihood path are independent on nulls.
\end{corollary}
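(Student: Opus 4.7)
My plan is to reduce the claim to Theorem~\ref{thm:suffCond}, exactly as in the proof of Corollary~\ref{cor:greedyLikIndep}. Three hypotheses need to be checked: (i) $U(Y;M)$ is complete and sufficient (assumed), (ii) the ever-active regularization path satisfies the SSP (Proposition~\ref{prop:forwardSSP}), and (iii) the next-entry $p$-value $p_k$ is $\sF_k$-measurable with respect to the sufficient filtration of the path. Exactness of $p_k$ is automatic from the definition of the reference law $\cL(\lambda_k \mid M_{k-1}, U_{k-1})$, so only item (iii) requires work.

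To verify measurability, I would split $p_k$ into its two ingredients. The reference law $\cL(\lambda_k \mid M_{k-1}, U_{k-1})$ is by definition a function of $(M_{k-1}, U_{k-1})$, hence is $\sF_{k-1}$-measurable and therefore also $\sF_k$-measurable. For the test statistic $\lambda_k$ itself, I would quote the argument already embedded in the proof of Proposition~\ref{prop:forwardSSP}: on the event $\{M_k = M(E)\}$, the restricted penalized estimator $\hat\theta^{(E,\lambda)}$ agrees with $\hat\theta^\lambda$ for every $\lambda \geq \lambda_k$, and $\hat\theta^{(E,\lambda)}$ depends on $Y$ only through $U(Y) = U_k(Y)$. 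Therefore, given $(M_k, U_k)$ we can reconstruct the entire segment of the solution path on $\{\lambda \in \Lambda : \lambda \geq \lambda_k\}$, and with it all of the transition values $\lambda_0 > \lambda_1 > \cdots > \lambda_k$. In particular $\lambda_k$ is a function of $(M_k, U_k)$ and so is $\sF_k$-measurable. Combining the two pieces, $p_k$ is $\sF_k$-measurable, and Theorem~\ref{thm:suffCond} delivers the desired independence on nulls.

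The only conceptual subtlety I expect a reader to pause over is the claim that $\lambda_k$ itself (and not merely the path at strictly larger penalties) is recoverable from $(M_k, U_k)$. This follows because $\lambda_k = \sup \Lambda_k$ is, by construction, the value at which the ever-active set first jumps to $E_k = M_k$, a transition that is visible in the reconstructed restricted path. Beyond this, the proof is essentially a transcription of Corollary~\ref{cor:greedyLikIndep} with $G_k$ replaced by $\lambda_k$.
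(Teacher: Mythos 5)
Your proposal is correct and follows essentially the same route as the paper: reduce to Theorem~\ref{thm:suffCond} via Proposition~\ref{prop:forwardSSP}, note the reference law is $\sF_{k-1}$-measurable, and show $\lambda_k$ is a function of $(M_k,U_k)$ by reconstructing the restricted solution path from $U_k$. The subtlety you flag (that $\lambda_k$ itself, not just the path at larger penalties, is recoverable) is exactly the point the paper handles by exhibiting $\gamma_k\in\Lambda_k$ with $\tE_{\gamma_k}=E_k$, so the jump to $E_k$ is visible in the reconstructed path.
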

\begin{proof}
  Per Theorem~\ref{thm:suffCond}, it is enough to show that $p_k$ is $\sF_k$-measurable with respect to the  sufficient filtration for the regularized-likelihood path.  Recall that 
  \[
  \Lambda_k = \{\lambda\in \Lambda:\; \tE_\lambda \supsetneq \tE_{\lambda_{k-1}}\},
  \quad 
  \lambda_k = \sup\Lambda_k,
  \quad \text{ and } \quad
  E_k = \bigcap_{\lambda\in\Lambda_k} \tE_{\lambda},
  \]
  and that the \smash{$\tE_{\lambda}$} are nested by construction and finite by assumption. As a result, there is some $\gamma_k\in \Lambda_k$ for which \smash{$\tE_{\gamma_k} = E_k$}. 
  
  Furthermore, for all $\lambda\in\Lambda, \lambda\geq \gamma_k$, we have \smash{$\tE_\lambda \sub E_k$}. As a result, \smash{${\hat\theta^{\lambda} =\hat\theta^{(E_k,\lambda)}}$} for such $\lambda$, so $\tE_{\lambda}$ depends only on \smash{$\ell_{E_k}^{Y\mid U}(Y;\, U_k)$}, and therefore $\lambda_k$ can be computed from $(E_k, U_k)$.  Second, the null distribution $\cL\left(\lambda_k \,\mid\, M_{k-1}, U_{k-1}\right)$ against which we compare $\lambda_k$ is also $\sF_k$-measurable because it depends only on $(M_{k-1}, U_{k-1})$, both of which are $\sF_{k-1}$-measurable.
\end{proof}

As we have shown, the selective max-$t$ test is equivalent to the selective greedy likelihood test in the case of linear regression. In the next section, we will compare and contrast the max-$t$ and next-entry, and other {\em selected-model tests}, with the {\em saturated-model tests} proposed by \citet{tibshirani2014exact} and others. 

\section{Selective $p$-Values in Regression}
\label{sec:selective-reg}
In linear regression, \citet{fithian2014optimal} draw a distinction between two main types of selective test that we might perform at step $k$: tests in the selected model, and tests in the saturated model. For simplicity, we will assume throughout this section that our path algorithm adds exactly one variable at each step. We also assume the path algorithm satisfies the SSP, so that we need not worry about the distinction between conditioning on $(M_{k-1}, U_{k-1})$ and conditioning on $(M_{0:(k-1)}, U_{k-1})$.

\subsection{Saturated-Model Tests}
Many conditional selective tests for linear regression use the saturated-model framework. See for example \citet{taylor2013tests}, \citet{tibshirani2014exact}, \citet{lee2013exact}, and \citet{loftus2014significance}. Define the saturated model as
\[
M_{\text{sat}}:\; Y \sim \cN(\mu, \sigma^2I_n),
\]
so named because there is a mean parameter for every observation. Because the parameter $\mu$ has the same dimension as the data $Y$, we must assume that $\sigma^2$ is known. If $\sigma^2$ is unknown, we can estimate $\sigma^2$ and apply the test while plugging in the estimated value; this is what we have done with the diabetes data to compute the $p$-values in Table~\ref{tab:diab}.  \citet{tibshirani2015uniform} discuss this and other strategies for unknown $\sigma^2$.

Saturated-model tests perform inference on selected linear functionals of $\mu$. In the context of sequential linear model selection,  we can use the saturated-model framework to test the incremental null \smash{$H^{\text{inc}}:\; \theta_{j_k}^{E_k} = 0$}, where $\theta^E$ minimizes \smash{$\|\mu - X_E\theta\|_2$}.

Let $\eta_k\in \R^n$ denote the vector for which \smash{$\theta_{j_k}^{E_k} = \eta_k'\mu$}, and let \smash{$\proj_{\eta_k}^\perp$} denote the projection operator into the space orthogonal to $\eta_k$. The UMPU saturated-model test for $\eta_k'\mu$ is based on the distribution \smash{$\cL\left( \eta_k'Y \mid E_{k-1}, \;j_k, \;\proj_{\eta_k}^\perp Y \right)$}, or equivalently
\begin{equation}\label{eq:satModel}
\cL\left( X_k'Y \mid E_{k-1}, \;j_k, \;\proj_{\eta_k}^\perp Y \right)
\end{equation}

While the test statistic $X_{j_k}'Y$ is measurable with respect to $\sF_k$, the null distribution we compare it against is not: it depends on \smash{$\proj_{\eta_k}^\perp Y$}, which is not measurable with respect to $\sF_k$. As a result, saturated-model $p$-values are in general {\em not} independent on nulls.

\subsection{Selected-Model Tests}
Selected-model inference, introduced in \citet{fithian2014optimal},
represents a more powerful option when testing for inclusion of variable $j_k$. While saturated-model inference always uses the same model $M_{\text{sat}}$, selected-model inference uses the lower-dimensional statistical model chosen by our selection procedure, similarly to data splitting. In the sequential case this means that at step $k$, we test the selected null model $M(E_{k-1})$, either against the next selected model $M(E_k)\setminus M(E_{k-1})$ or against the upper model $M_\infty\setminus M(E_{k-1})$.

\subsubsection{Testing Against the Next Model}\label{sec:identify}
One option for selected-model inference is to perform a selective test of $M_{k-1}$ against $M_{k}\setminus M_{k-1}$, conditioning on both models. Conditioning on $(M_{k-1},M_k)$ is equivalent to conditioning on $(E_{k-1}, j_k)$. If $\sigma^2$ is known, the test is based on
\begin{equation}\label{eq:selModel_cond}
\cL\left(X_{j_k}'Y \mid E_{k-1}, \; j_k, \; X_{E_{k-1}}'Y\right).
\end{equation}
If $\sigma^2$ were unknown we would also need to condition on $\|Y\|^2$. We can construct $p$-values from the UMPU, equal-tailed, or one-sided test based on the one-parameter exponential family~\eqref{eq:selModel_cond}; the only difference is in how much of the overall level $\alpha$ is apportioned to the right or left tail \citep{fithian2014optimal}. If we use the construction in~\eqref{eq:selModel_cond} after a subpath-sufficient path algorithm, the resulting $p$-values are always guaranteed to have independence on nulls. This is because the test statistic $X_{j_k}'Y$ is measurable with respect to $\sF_k$, as is the reference null distribution.

Note the contrast between~\eqref{eq:selModel_cond} and~\eqref{eq:satModel}. In~\eqref{eq:selModel_cond}, we only need to condition on an $|E_{k-1}|$-dimensional projection of $Y$, whereas in~\eqref{eq:satModel} we condition on an $(n-1)$-dimensional projection of $Y$. Conditioning on more information can sap the power of a test; as we will see in Sections~\ref{sec:bivariate} and~\ref{sec:sparseReg}, the saturated-model tests often pay a heavy price for this extra conditioning.

\subsubsection{Testing Against the Upper Model}

To avoid conditioning on the next model $M_k$, we could instead test $M_{k-1}$ against the alternative $M_{\infty}\setminus M_{k-1}$. In that case, we could modify~\eqref{eq:selModel_cond} and base the test on
\begin{equation}\label{eq:selModel_marg}
\cL\left(X_{j_k}'Y \mid E_{k-1}, \; X_{E_{k-1}}'Y\right).
\end{equation}
Again, if $\sigma^2$ were unknown we would also need to condition on $\|Y\|^2$. We could also replace $X_{j_k}'Y$ with any other test statistic $T(Y)$. As long as $T(Y)$ is measurable with respect to $(M_k, U_k)$, and the path algorithm is subpath-sufficient, the $p$-values are guaranteed to be independent on nulls. The next-entry test for the lasso and the max-$t$ test in forward-stepwise regression with unknown $\sigma^2$ are both examples of this approach. Note that under resampling from the law in~\eqref{eq:selModel_marg}, the selected variable $j_k$ is random; by contrast, \eqref{eq:selModel_cond} conditions on $j_k$.

We will not pursue tests like those in Section~\ref{sec:identify}. It is not necessary to condition on $M_k$ to obtain the FDR and FWER guarantees that we want to establish, and so we follow the general principle of conditioning on as little information as possible. In Section~\ref{sec:sparseReg}, we empirically compare the max-$z$ test with the test that uses the same test statistic but also conditions on the identity of the next variable $j_k$. In our simulation this test, which we call the {\em max-$z$-identify test}, performs similarly to the max-$z$ test.

\subsection{Comparison of Selected- and Saturated-Model Inference}\label{sec:bivariate}

We now briefly compare the computational, conceptual, and statistical advantages and disadvantages of selected- and saturated-model tests, focusing on the problem of sequential model selection. For a detailed general discussion, see Section 5 of \citet{fithian2014optimal}.

Selected-model inference is typically more computationally involved than saturated-model inference. Because the saturated-model test conditions on $n-1$ dimensions, the resulting distribution in~\eqref{eq:satModel} is nothing more than a truncated univariate Gaussian random variable, and inference can be carried out analytically. By contrast, selected-model tests typically require MCMC sampling from a truncated multivariate Gaussian distribution of dimension $p-|E_{k-1}|$.

For some applications, a major conceptual advantage of the saturated-model approach is that, even when $M(E)$ is misspecified, $\theta^{E}$ is still a well-defined quantity on which we can perform exact inference. For example, a 95\% confidence interval for $\theta_{j_k}^{E_k}$ could cover that coefficient with probability 95\% even if $M(E_k)$ is misspecified. In the setting of sequential model selection, this advantage is less important: at an intermediate step $k$, we are not constructing intervals for $\theta_{j_k}^{E_k}$ but rather deciding whether to reject $M_{k-1}$ and move on. However, the saturated-model framework could be quite helpful after selection is finished, if we want to construct intervals for the least-squares parameters for the selected active set $E_{\hk}$.

There are several major statistical benefits to selected-model inference. First, selected-model inference allows us to drop the assumption that $\sigma^2$ is known. This is not possible for the saturated model because conditioning on both $\|Y\|^2$ and \smash{$\proj_{\eta_k}^\perp Y$} results in a degenerate conditional distribution for $\eta_k'Y$. Second, we have seen in Section~\ref{sec:pValsIndep} that tests of the form~\eqref{eq:selModel_cond} or~\eqref{eq:selModel_marg} yield independent $p$-values under general conditions, allowing us to apply  the sequential stopping rules of~\citet{gsell2013sequential} and \citet{li2015accumulation}. Finally, and most importantly, selected-model tests can be dramatically more powerful than saturated-model tests, as we illustrate in the next example.

\begin{example}[Bivariate Regression with Identity Design]\label{ex:bivariate}
  Consider forward stepwise selection in a regression model with $n=p=2$, with known $\sigma^2=1$ and identity design 
\[
X = I_2=\begin{pmatrix} 1 & 0 \\ 0 & 1\end{pmatrix}.
\] 
The forward stepwise path and the lasso path both select $j_1=1$ if and only if $|Y_1|>|Y_2|$. The selection event $A_1=\{|Y_1| > |Y_2|\}$ is shown in yellow in Figure~\ref{fig:bv_condSets}. On $A_1$, the first two models are
\[
M_0:\; Y\sim \cN(0,I_2), \quad \text{ and } \quad
M_1:\; Y \sim \cN\left(\binom{\mu_1}{0}, \; I_2\right).
\]
The selected-model test at step 1 compares $Y_1$ to its distribution under $M_0$, the global null, conditional on $A_1$.\footnote{In this very simple example, the max-$z$, max-$z$-identify, and next-entry tests are all equivalent.} By contrast, the saturated-model test is a test of $H_0:\; \mu_1=0$ in the model $M_{\text{sat}}:\; Y \sim \cN(\mu, I_2)$. Thus, it must condition on $Y_2$ to eliminate the nuisance parameter $\mu_2$, comparing $Y_1$ to its null distribution given $A_1$ and the observed value of $Y_2$.

Figure~\ref{fig:bv_condSets} shows the conditioning sets for each model when $Y=(2.9, 2.5)$. Beside it, Figure~\ref{fig:bv_nullDists} shows the null conditional distribution of the test statistic $Y_1$ for each test. The $p$-values for the selected and saturated models are 0.007 and 0.3, respectively. Figure~\ref{fig:bv} is reproduced from \citet{fithian2014optimal}, where the same example was presented in less detail.
\end{example}

\begin{figure}
  \centering
  \begin{subfigure}[t]{.4\textwidth}
    \includegraphics[width=\textwidth]{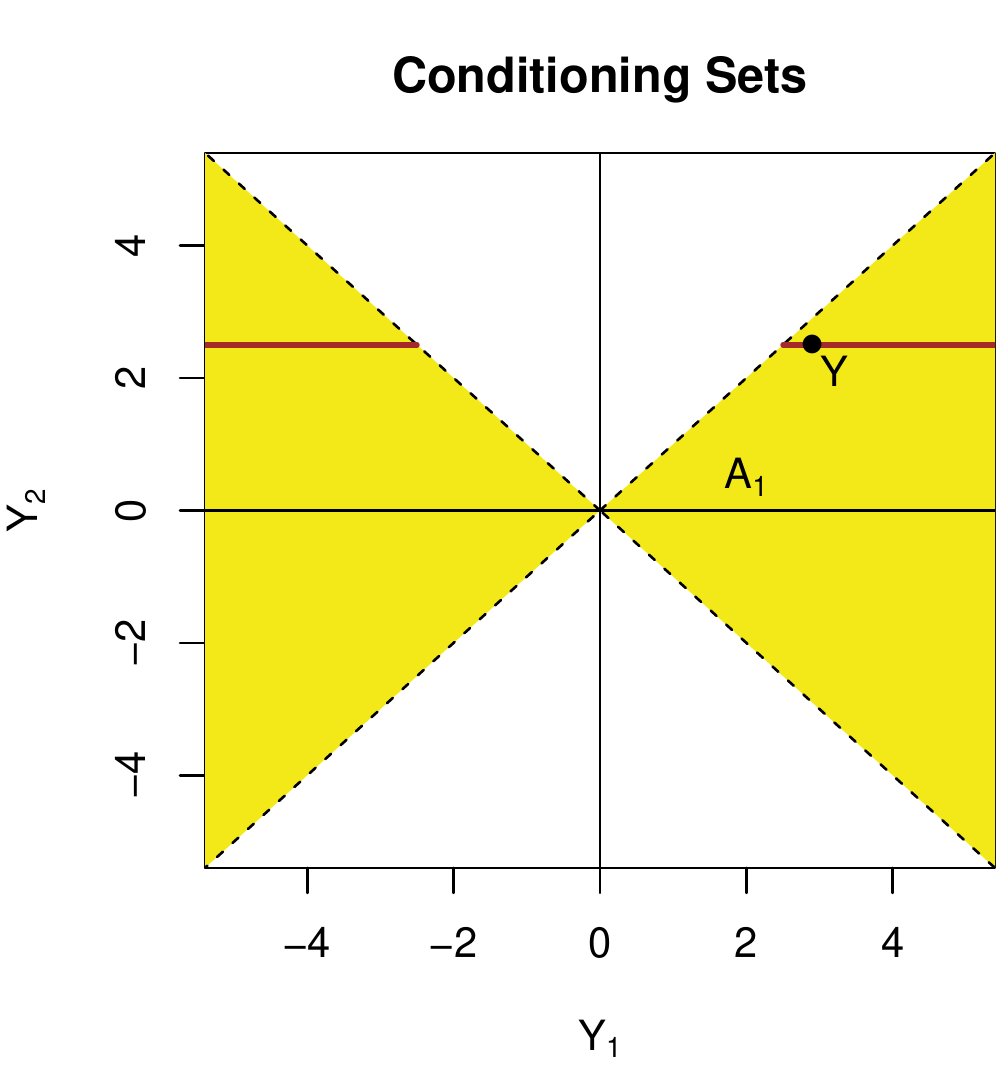}
    \caption{The selected-model test conditions on $j_1=1$ (yellow region), while the saturated-model test also conditions on $Y_2=2.5$ to eliminate the nuisance variable $\mu_2$ (brown line segments).}
    \label{fig:bv_condSets}
  \end{subfigure}
  \hspace{.1\textwidth}
  \begin{subfigure}[t]{.4\textwidth}
    \includegraphics[width=\textwidth]{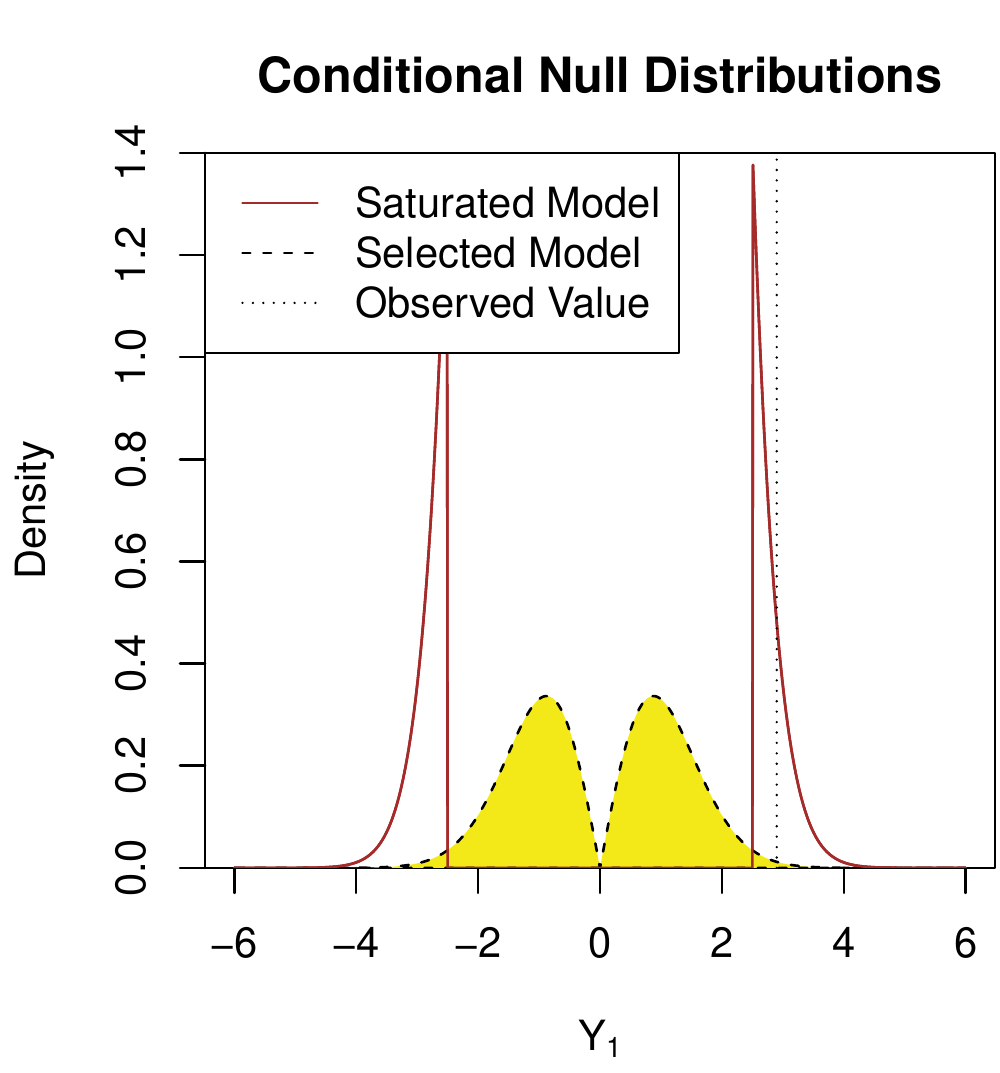}
    \caption{Conditional distributions of $Y_1$ under
      $M_0:\; Y \sim \cN(0,I_2)$. The realized value $|Y_1|=2.9$ is
      quite large given that $j_1=1$. By
      contrast, $|Y_1|=2.9$ is not especially large once we 
      also condition on $Y_2=2.5$.}
  \label{fig:bv_nullDists}
  \end{subfigure}
  \caption{Contrast between the saturated-model and selected-model
    tests in Example~\ref{ex:bivariate}. The selected-model test is based on  $\cL(Y_1 \gv j_1=1)$,  whereas the saturated-model test is based on $\cL(Y_1  \gv Y_2, \; j_1=1)$. 
    When $Y=(2.9, 2.5)$, the selected- and saturated-model $p$-values are 0.007 and 0.3, respectively.}
  \label{fig:bv}
\end{figure}

Figure~\ref{fig:bv} illustrates a general phenomenon in saturated-model tests: when there are near-ties between strong variables that are competing to enter the model, the resulting $p$-value may be very weak. Figure~\ref{fig:bv_rocCurve} displays the cumulative distribution function for the first $p$-value when $\mu=(4,4)$, a very strong signal. While the selected-model test has nearly perfect power, it is not uncommon for the saturated-model test to produce large $p$-values, even in the range of 0.5-0.9. These large $p$-values arise when there is a near tie between the variables.

\begin{figure}
  \centering
  \includegraphics[width=.5\textwidth]{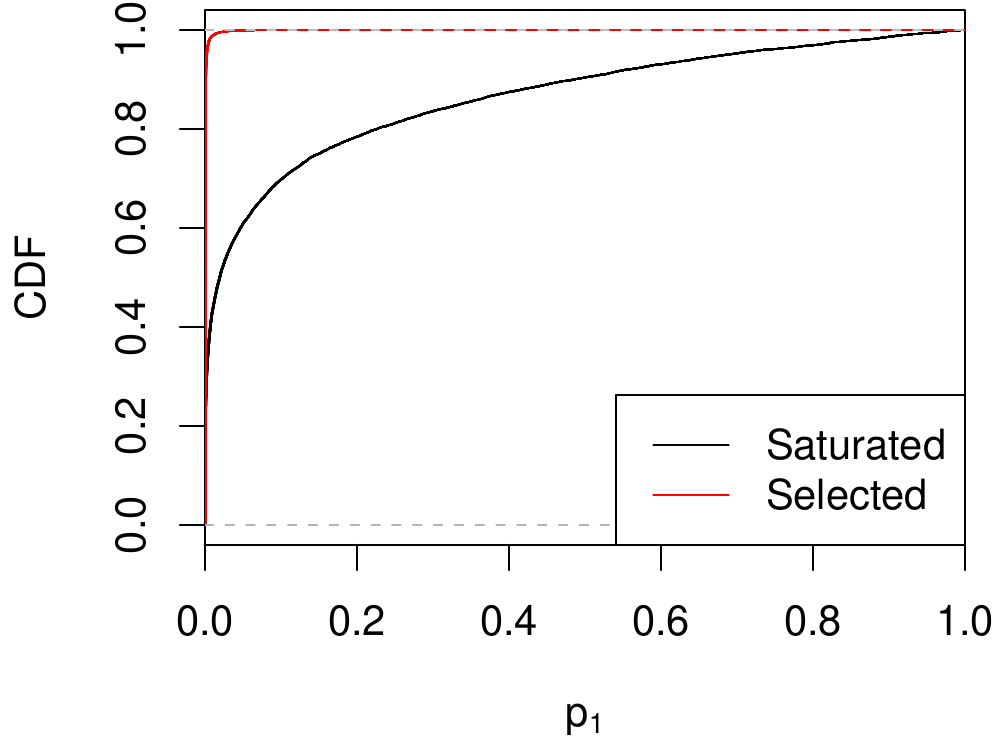}
  \caption{Cumulative distribution function of $p_1(Y)$ for the selected- and saturated-model tests when $\mu=(4,4)$. Even though the signal is very strong, the saturated-model test results in large $p$-values in realizations like the one in Figure~\ref{fig:bv_nullDists}, where there is a near-tie between $|Y_1|$ and $|Y_2|$. By contrast, the selected-model test has nearly perfect power.}
  \label{fig:bv_rocCurve}
\end{figure}

Results in \citet{fithian2014optimal} show that the selected-model test is strictly more powerful when the selected model is correct, i.e., when $\mu_2=0$. Figure~\ref{fig:bv_powCurves} shows the power curve for each test when $\mu_2=0$ (left panel) and $\mu_2=4$ (right panel). While the selected-model test is more powerful when the selected model is correct, the difference between the two is relatively small. The difference is much more pronounced when $\mu_2=4$. Note that if $\mu=(0,4)$, the incremental null is true, but the complete null is false. That is, the global null model $M_0$ is missing an important signal variable, but the missing variable is not $X_1 = (1,0)$. Because the saturated-model test is a valid test of the incremental null, its power is $\alpha=0.05$. By contrast, the selected-model test rejects about half of the time when $\mu=(0,4)$, successfully detecting the poor fit of $M_0$ to the data.

\begin{figure}
  \centering
  \begin{subfigure}[t]{.4\textwidth}
    \includegraphics[width=\textwidth]{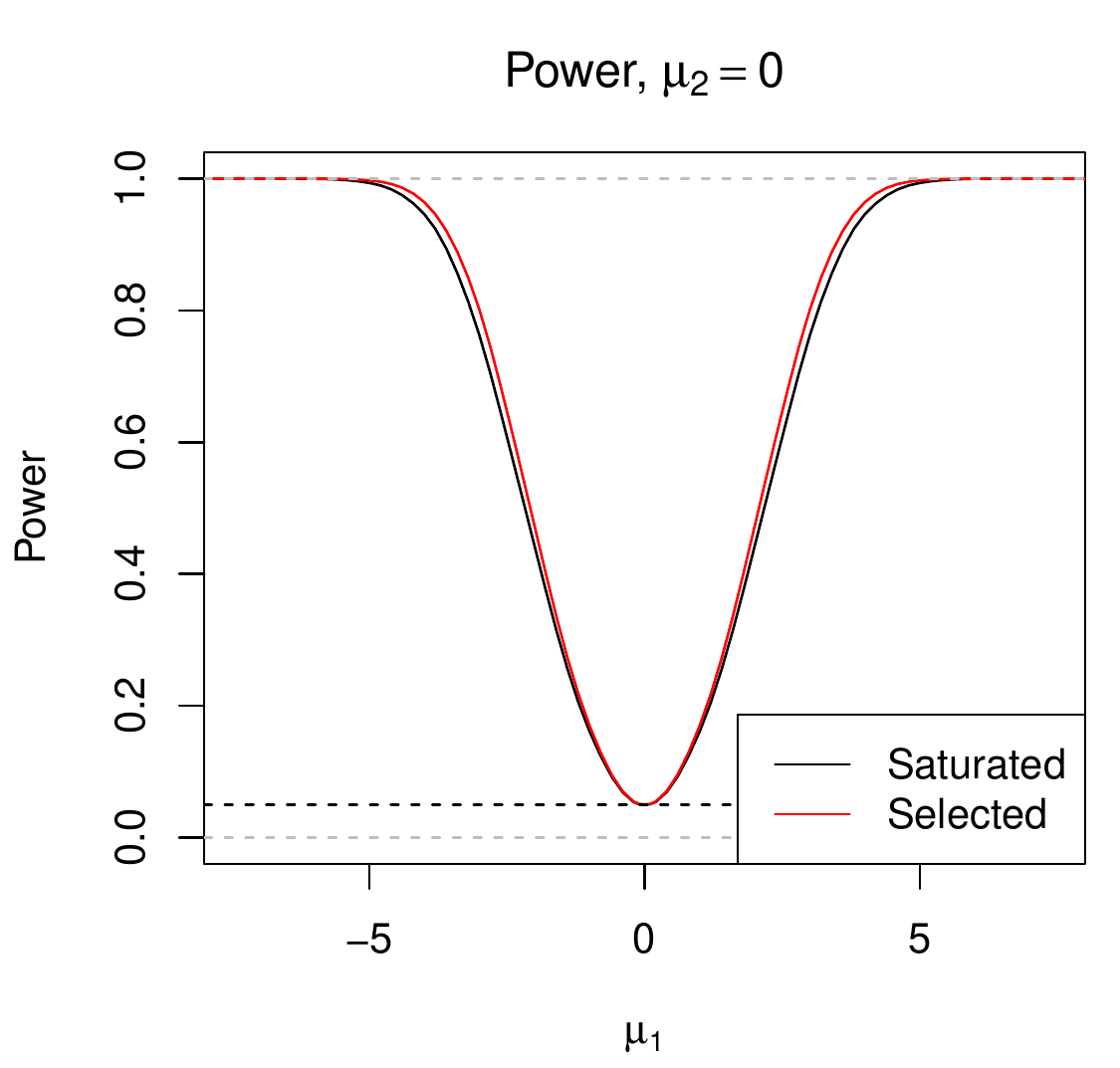}
  \end{subfigure}
  \hspace{.1\textwidth}
  \begin{subfigure}[t]{.4\textwidth}
    \includegraphics[width=\textwidth]{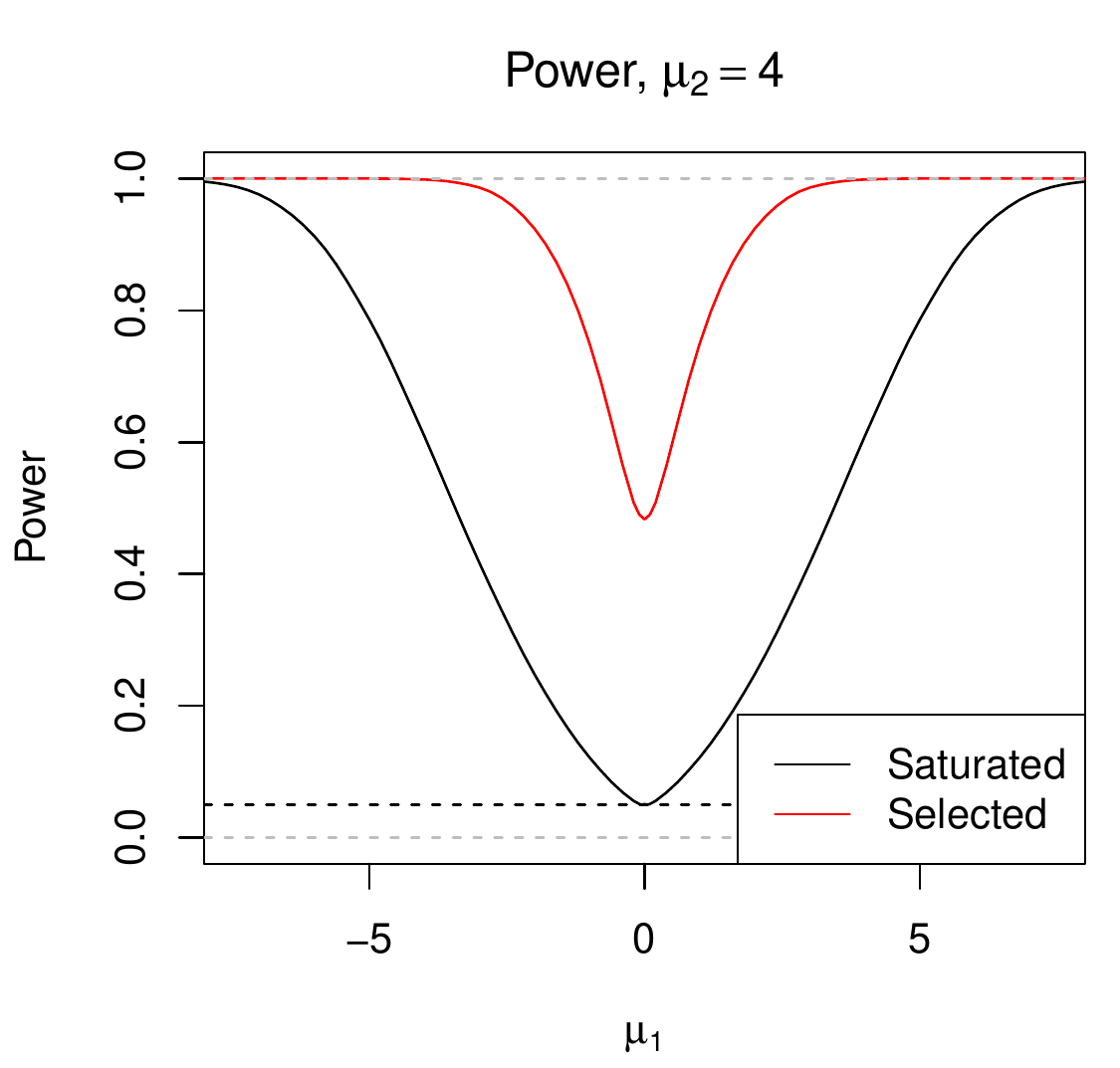}
  \end{subfigure}
  \caption{Power at level $\alpha=0.05$ for saturated- and selected-model tests at step 1, given that $j_1=1$. The power is plotted as a function of $\mu_1$, for two different values of $\mu_2$. When $\mu_2=0$, the selected-model test is strictly more powerful than the saturated-model test, but the difference is slight. By contrast, when $\mu_2=4$, the selected-model test is much more powerful. The dashed line shows $\alpha=0.05$.}
   \label{fig:bv_powCurves}
\end{figure}

Saturated-model $p$-values are generically non-independent. Continuing with Example~\ref{ex:bivariate}, Table~\ref{tab:bv_twoWayTable} shows a two-way contingency table for the saturated-model $p$-values $(p_1(Y), p_2(Y))$, binned into cells of height and width 0.2, simulated under the global null $\mu=0$. Because $k_0=0$, both $p$-values are uniform by construction, but the saturated-model $p$-values are strongly dependent, with correlation $-0.48$. By contrast, the selected-model $p$-values $(p_1,p_2)$ are independent under the global null.

\begin{table}[ht]
  \centering
  \begin{tabular}{l|ccccc|c}
    \multicolumn{7}{c}{Saturated-Model $p$-Values 
      (\% of $10^6$ Simulations)}\\[7pt]
    \hline
    \multicolumn{7}{c}{}\\[-1.5ex]
    \multicolumn{7}{c}{$p_2(Y)$}\\[5pt]
    ${\large p_1(Y)}$ & (0,0.2] & (0.2,0.4] & (0.4,0.6] & (0.6,0.8] & (0.8,1] & \textbf{Total} \\ 
    \hline
    (0,0.2] & 1.0 & 2.7 & 4.2 & 5.6 & 6.7 & 20.1 \\ 
    (0.2,0.4] & 1.4 & 3.4 & 4.5 & 5.2 & 5.5 & 20.0 \\ 
    (0.4,0.6] & 2.3 & 4.3 & 4.7 & 4.5 & 4.2 & 20.0 \\ 
    (0.6,0.8] & 4.2 & 5.4 & 4.3 & 3.4 & 2.7 & 20.0 \\ 
    (0.8,1] & 11.1 & 4.3 & 2.3 & 1.4 & 1.0 & 20.0 \\ 
    \hline
    \textbf{Total} & 19.9 & 20.0 & 20.0 & 20.1 & 20.0 & 100.0 \\ 
    \hline
  \end{tabular}
  \caption{Two-way contingency table of saturated-model $p$-values $(p_1(Y), p_2(Y))$ for Example~\ref{ex:bivariate}, after binning into cells of height and width 0.2. We report the percentage of $p$-value pairs falling into each cell out of one million simulations from the global null hypothesis, $\mu=0$. Both $p$-values are marginally uniform but strongly dependent, with a correlation of $-0.48$.}
\label{tab:bv_twoWayTable}
\end{table}

\section{Computation}
\label{sec:computation}

Until this point, we have deferred discussing how to compute the reference distribution for the max-$t$ test, next-entry test, or any of the other tests we have discussed. In each case, the $p$-value at step $k$ comes from a test that rejects when some test statistic $T_k$ is large compared to the law
\begin{equation}\label{eq:genDist}
\cL(T_k(Y) \,\mid\, M_{k-1}(Y), U_{k-1}(Y)),
\end{equation}
under the null hypothesis that $F\in M_{k-1}$. Because $U_{k-1}$ is sufficient for $Y$ under the null, this law is completely known. Thus, we have reduced computation of the $p$-value at step $k$ to a sampling problem: if we can sample values of $Y$ from its null distribution given $M_{k-1}(Y), U_{k-1}(Y)$, and compute the test statistic $T_k(Y)$ for each sample, we can numerically approximate the $p$-value to whatever precision we desire.

In many cases it is easy to sample $Y$ given $U_{k-1}$, in which case computational difficulties only arise insofar as the selection event $\{M_{k-1}(Y)=M\}$ is difficult to condition on, or the test statistic $T_k(Y)$ is difficult to evaluate. Due to the generality in which we have posed the problem, we cannot give a general prescription for how to carry out this sampling. However, we give some details for the important case of linear regression in Appendix~\ref{sec:linRegComputation}.

\subsection{Constraints for the Greedy Likelihood and Next-Entry Tests}\label{sec:constraints}

For a completely general selection algorithm, we might expect severe computational difficulties when we attempt to condition on the null model at stage $k$, $M_{k-1}(Y)$. For example, in forward stepwise regression, $M_{k-1}$ is the cumulative result of $k-1$ steps of greedy variable selection. At each step $s<k$, the selection of variable $j_s$ is a result of comparing its $t$-statistic with the $t$-statistics of the other $p-k$ inactive variables. Based on this logic, we might expect to accumulate at least $O(p)$ constraints at each step, resulting in $O(kp)$ constraints after $k$ steps \citep{tibshirani2014exact}.

Remarkably, however, in both forward-stepwise and lasso linear regression, the computational problem actually becomes easier when we move to the sequential setting considered here. We now show that for both forward-stepwise regression and lasso linear regression --- and their respective generalizations to exponential family models --- conditioning on the event $\{E_{k-1}=E, U_k=u\}$, for any $k, E,$ and $u$, only introduces $2(p-k)$ linear inequality constraints, amounting to a rectangular constraint for the sufficient statistics of excluded variables. This increases the speed of both accept/reject and hit-and-run sampling for our problem. We begin by proving the result for forward stepwise linear regression, followed by a more general result for the forward stepwise likelihood path in exponential families. To avoid trivialities, we assume throughout that there are almost surely no ties. This is true, for example, in linear regression if $X$ is in general position. 

\begin{theorem}
  Assume the model path $M_{0:d}$ is obtained by forward stepwise 
  linear regression. Then, for a candidate active set $E$ of size $k$, 
  the set $A = \{E_k = E, \;X_E'Y = u\}$ is characterized 
  exactly by the constraints $X_E'Y=u$ and
  \[
  v_j^-(E,u) \leq X_j'Y \leq v_j^+(E,u), \quad\forall j \notin E,
  \]
  for $v_j^-$ and $v_j^+$ given explicitly in (\ref{eq:vMinus_FS}--\ref{eq:vPlus_FS}).  Thus, $A$ corresponds exactly to a set with $2(p-k)$ linear inequality constraints and $k$ linear equality constraints on $Y$.
\end{theorem}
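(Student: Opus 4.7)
The plan is to write the selection event $A = \{E_k = E,\; X_E'Y = u\}$ as the intersection, over steps $s = 1, \ldots, k$ and competing variables $j \notin E_{s-1}$ with $j \neq j_s$, of the forward-stepwise optimality inequalities $\RSS(E_{s-1} \cup \{j_s\}) \leq \RSS(E_{s-1} \cup \{j\})$, and then analyze each such inequality after conditioning on $X_E'Y = u$.

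First I would partition the constraints according to whether the competing variable $j$ lies in $E$. Using the standard rank-one update
\begin{equation*}
\|P_{D \cup \{j\}}Y\|^2 \;=\; \|P_D Y\|^2 \;+\; \frac{\left(X_j'(I-P_D)Y\right)^2}{X_j'(I-P_D)X_j},
\end{equation*}
each RSS comparison becomes a quadratic inequality in $X_j'Y$, with all other ingredients computable from $u$ and $X$: specifically $P_{E_{s-1}}Y$ is a linear functional of $X_{E_{s-1}}'Y$, hence determined by $u$, and $X_{j_s}'Y$ for $j_s \in E$ is a coordinate of $u$. When $j \in E$, the resulting inequality therefore depends only on $u$; on $\{A \neq \emptyset\}$ these hold automatically and, under the no-ties assumption, uniquely determine the order $j_1, \ldots, j_k$ in which the elements of $E$ enter.

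Next, for each $j \notin E$, the $k$ remaining constraints take the form $(X_j'Y - c_{j,s}(u))^2 \leq d_{j,s}\, L_s(u)$, where $c_{j,s}$ is an explicit linear functional of $u$, $d_{j,s} = X_j'(I-P_{E_{s-1}})X_j > 0$, and $L_s(u) \geq 0$ is the squared partial signal of the winning variable at step $s$. Each such condition is a symmetric interval constraint on the scalar $X_j'Y$, and since the intersection of intervals on $\R$ is again an interval, taking the maximum of lower endpoints and the minimum of upper endpoints over $s$ collapses them into a single pair of linear inequalities $v_j^-(E,u) \leq X_j'Y \leq v_j^+(E,u)$, with $v_j^\pm$ given by these explicit max/min formulas. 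Summing over the $p-k$ excluded variables yields $2(p-k)$ linear inequalities, together with the $k$ equalities $X_E'Y = u$.

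The main point to handle carefully is the bookkeeping at the start: verifying that constraints with $j \in E$ are vacuous (or else render $A$ empty) and force the correct ordering, and that for $j \notin E$ the quantities $c_{j,s}$, $d_{j,s}$, and $L_s(u)$ really are functions of $(E, u, X)$ alone. Once this is in place, the reduction from $k(p-k)$ quadratic constraints to $2(p-k)$ linear ones is just a one-dimensional intersection of intervals, and the explicit formulas (\ref{eq:vMinus_FS}--\ref{eq:vPlus_FS}) fall out by reading off the endpoints.
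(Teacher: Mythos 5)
Your proposal is correct and follows essentially the same route as the paper: on $A$ the subpath sufficiency principle makes the entire fitted path, the winning thresholds, and the interval endpoints functions of $(E,u)$ alone, each non-entry condition for $j\notin E$ becomes a two-sided linear constraint on the scalar $X_j'Y$, and intersecting the $k$ resulting intervals yields the max/min formulas for $v_j^\pm$. The only cosmetic differences are that you phrase the step-$s$ comparison via the RSS rank-one update rather than the normalized residual correlation (these are the same criterion), and you make explicit the point, left implicit in the paper, that the comparisons among variables within $E$ depend only on $u$ and merely pin down the entry order.
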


\begin{proof}
  At stage $i < k$, the next variable to enter is the one with maximal correlation with the residual vector, i.e.,
  \[
  j_{i+1} = \argmax_{j \notin E_i} 
  \frac{\left| X_j ' (Y - X_{E_i}\hat\beta^i) \right|}
  {\|\proj_{E_i}^\perp X_j\|_2}
  \]
  Because forward stepwise regression satisfies the SSP, we know the entire path of fitted models up to step $k$ once we condition on the $k$th active set $E_k$ and its sufficient statistics \smash{$X_{E_k}'Y$}. On the set $\{E_k=E\}$, all of the quantities $j_{i+1}$, $E_i$, $\hat\beta^i$, and \smash{$\proj_{E_i}^\perp$} depend only on $X_E'Y$. For brevity, write 
  \[
  C_i^* = \max_{j \notin E_i} 
  \frac{\left| X_j ' (Y - X_{E_i}\hat\beta^i) \right|}
  {\|\proj_{E_i}^\perp X_j\|_2}.
  \]
  On $A$, $C_i^*$ is attained at $j=j_{i+1}$, the $(i+1)$st variable added.

  If $X_E'Y$ is fixed at $u$, and $j \notin E$, then the condition for $X_j$ {\em not} to enter at step $i+1 < k$ is
  \[
  \frac{\left| X_j' (Y - X_{E_i}\hat\beta^i) \right|}
  {\|\proj_{E_i}^\perp X_j\|_2} 
  \leq C_i^*,
  \]
  or equivalently,
  \begin{equation}\label{eq:noEnterBounds_FS}
    X_j' X_{E_i}\hat\beta^i -
    C_i^*\|\proj_{E_{i}}^\perp X_{j}\|_2
    \;\;\;\leq\;\;\;
    X_j'Y
    \;\;\;\leq\;\;\;
    X_j' X_{E_i}\hat\beta^i +
    C_i^*\|\proj_{E_{i}}^\perp X_{j}\|_2, 
  \end{equation}
  so the set $A$ is equivalent to 
  \eqref{eq:noEnterBounds_FS} holding
  for every $i < k$ and $j \notin E$. Nominally, this gives $2k(p-k)$
  linear inequality constraints to satisfy, but most of them are
  non-binding. On $A$, the upper and lower bounds
  in~\eqref{eq:noEnterBounds_FS}
  are all known functions of $E$ and $u$, so we can set
  \begin{align}\label{eq:vMinus_FS}
    v_j^-(E,u) &= \max_{0 \leq i < k} \;\;X_j' X_{E_i}\hat\beta^i -
    C_i^*\|\proj_{E_{i}}^\perp X_{j}\|_2, \\
    \label{eq:vPlus_FS}
    v_j^+(E,u) &= \min_{0 \leq i < k} \;\;X_j' X_{E_i}\hat\beta^i +
    C_i^*\|\proj_{E_{i}}^\perp X_{j}\|_2.
  \end{align}  
\end{proof}

As we see below, the same result holds for the much larger class of forward stepwise likelihood paths in exponential family models.

\begin{theorem}
  Assume that $M_\infty$ is an exponential family model
  of the form
  \[
  Y \sim \exp\{ \theta'U(y) - \psi(\theta) \}\,d\nu(y),
  \]
  with $\Theta \sub \R^p$ convex, and assume that $M_{0:d}$ is the forward stepwise likelihood path with
\begin{equation}
j_k = \argmax_j \;\sup \left\{\ell(\theta; Y):\; \theta\in\Theta(E_{k-1} \cup \{j\})\right\}, \quad \text{ and } \quad
E_k = E_{k-1} \cup \{j_k\}.
\end{equation}
  Then, for a candidate active set $E$ of size $k$, the set $A = \{E_k = E, \;U_E = u\}$ is characterized exactly by the constraints $U_E = u$ and
  \[
  v_j^-(E,u) \leq U_j(Y) \leq v_j^+(E,u), \quad\forall j \notin E,
  \]
  for $v_j^-$ and $v_j^+$ given in (\ref{eq:vMinus_FSL}--\ref{eq:vPlus_FSL}). Thus, $A$ corresponds exactly to a set with $2(p-k)$ linear inequality constraints and $k$ linear equality constraints on $U(Y)$.
\end{theorem}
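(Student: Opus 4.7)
The plan is to mirror the forward-stepwise linear regression proof above, with the Legendre conjugate of the log-partition function playing the role of the profile RSS criterion. First, I would fix a candidate active set $E$ of size $k$ and a realization $u$ of $U_E$, and work on the event $A = \{E_k = E,\, U_E = u\}$. By the subpath sufficiency principle (Proposition~\ref{prop:forwardSSP}), the entire subpath $E_0 \sub E_1 \sub \cdots \sub E_k = E$, together with all intermediate sufficient statistics $U_{E_i}$ for $i \leq k$, is a deterministic function of $(E, u)$ on $A$. In particular, for each $i < k$ the quantity
\[
L_i^\star(E, u) \;:=\; \sup_{\theta \in \Theta(E_{i+1})} \ell(\theta; Y)
\]
is known once $(E,u)$ is known, by sufficiency of $U_{E_{i+1}}$.

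Next, I would translate the greedy selection criterion at step $i+1$ into an inequality involving only $U_j(Y)$ for $j\notin E$. The event that such a $j$ is \emph{not} chosen at step $i+1$ is
\[
g_{i,j}(U_j(Y)) \;:=\; \sup_{\theta\in\Theta(E_i \cup \{j\})} \ell(\theta; Y) \;\leq\; L_i^\star(E, u),
\]
where $g_{i,j}$ depends on $Y$ only through $(U_{E_i}, U_j)$; since $U_{E_i}$ is pinned on $A$, this becomes an inequality on the scalar $U_j$ alone. The key step is to argue that $g_{i,j}$ is convex in $U_j$. Writing $\psi_F(\theta_F) = \psi(\theta_F, \mathbf{0}_{F^c})$ for the log-partition of the submodel with active set $F = E_i \cup \{j\}$, convexity of $\psi$ on $\Theta$ implies convexity of $\psi_F$, and the identity
\[
\sup_{\theta\in\Theta(F)} \ell(\theta; Y) \;=\; \sup_{\theta_F}\bigl\{\theta_F' U_F(Y) - \psi_F(\theta_F)\bigr\} \;=\; \psi_F^\star(U_F(Y))
\]
exhibits the profile log-likelihood as the Legendre conjugate of $\psi_F$, which is a convex function of $U_F$. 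Restricting a convex function to the affine line along which $U_{E_i}$ is held fixed preserves convexity, so $g_{i,j}$ is indeed convex in the scalar $U_j$.

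Since any sublevel set of a convex function on $\R$ is an interval, the constraint $g_{i,j}(U_j) \leq L_i^\star(E,u)$ takes the form $a_{i,j}(E,u) \leq U_j(Y) \leq b_{i,j}(E,u)$ for endpoints depending only on $(E,u)$. Aggregating across the $k$ selection steps, the event $A$ is equivalent to $U_E = u$ together with, for each $j \notin E$,
\[
v_j^-(E,u) \;:=\; \max_{0 \leq i < k} a_{i,j}(E,u) \;\leq\; U_j(Y) \;\leq\; \min_{0 \leq i < k} b_{i,j}(E,u) \;=:\; v_j^+(E,u),
\]
which are the formulas (\ref{eq:vMinus_FSL}--\ref{eq:vPlus_FSL}) alluded to in the statement. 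This yields exactly $2(p-k)$ linear inequality constraints in addition to the $k$ equalities from $U_E = u$. I expect the main obstacle to be the convexity step: one has to commit to the Legendre-transform representation of the profile log-likelihood and then verify that restricting to a one-dimensional affine slice preserves convexity. Once that is in place, the max/min aggregation across $i$ is a verbatim parallel of the linear regression argument, and it automatically subsumes that case when $\psi$ is quadratic.
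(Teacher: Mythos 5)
Your proposal is correct and follows essentially the same route as the paper: both express the profile log-likelihood of $\Theta(E_i\cup\{j\})$ as the convex conjugate $\psi_{E_i\cup\{j\}}^*$ evaluated at $(u_{E_i}, U_j)$, observe via the SSP that the thresholds (your $L_i^\star$, the paper's $C_i^*$, which coincide on $A$ since the maximum is attained at $j_{i+1}$) are functions of $(E,u)$, use convexity in the one remaining coordinate $U_j$ to turn each non-entry condition into an interval constraint, and intersect over $i$ by taking the max of lower and min of upper endpoints. No gaps.
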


\begin{proof}
  For a set $E\sub \{1,\ldots,p\}$ define
  \[
  \ell_E^*(U(Y)) 
  = \sup\left\{\theta'U(Y) - \psi(\theta) :\; \theta\in \Theta(E) \right\}
  = \sup\left\{\theta_E'U_E(Y) - \psi_E(\theta_E):\; \theta\in\Theta(E) \right\},
  \]
  where $\psi_E:\; \R^{|E|} \to \R$ is defined such that $\psi(\theta)=\psi_E(\theta_E)$ for $\theta\in \Theta(E)$. Note that $\ell_E^*$ is the convex conjugate of the function $\psi_E$. For step $i<k$, write
  \[
  C_i^* = \max_{j \notin E_i} \;\ell_{E_i\cup \{j\}}^*(U(Y)).
  \]
  On $A$, $C_i^*$ is attained at $j=j_{i+1}\in E$, the next variable to enter. For $j\notin E$, the condition for $j$ {\em not} to enter at step $i+1$ is \smash{$\ell_{E_i\cup \{j\}}^*(U(Y)) \leq C_i^*$}. Due to the SSP, on the set $\{E_k=E\}$ all of the quantities $j_{i+1}$, $E_i$, and $C_i^*$ are functions of $U_E(Y)$.

  Because $\ell_{E_i\cup \{j\}^*}(U)$ depends only on $U_{E_i}$ and $U_j$, the former of which is fixed at $u_{E_i}$, we can consider it as a one-dimensional convex function of $U_j$; hence any sublevel set is an interval with possibly infinite endpoints. Thus
  \begin{equation}\label{eq:noEnterBounds_FSL}
  \{U_E=u, \; \ell_{E_i\cup \{j\}}^*(U(Y)) \leq C_i^*\} 
  = \{ U_E=u, \; v_{j,i}^-(E,u) \leq U_j \leq v_{j,i}^+(E,u)\},
  \end{equation}
  where $v_{j,i}^{-}$ and $v_{j,i}^+$ are the two values of $U_j$ such that \smash{$\ell_{E_i\cup \{j\}}^*(U)=C_i^*$}, or else $\pm\infty$.

  The set $A$ is equivalent to 
  \eqref{eq:noEnterBounds_FSL} holding
  for every $i < k$ and $j \notin E$. As before, nominally, this gives $2k(p-k)$ 
  linear inequality constraints to satisfy, but most of them are
  non-binding. On $A$, the upper and lower bounds
  in~\eqref{eq:noEnterBounds_FSL}
  are all known functions of $E$ and $u$, so we can set
  \begin{align}\label{eq:vMinus_FSL}
    v_j^-(E,u) &= \max_{0 \leq i < k} \;\;v_{j,i}^-(E,u),\\
    \label{eq:vPlus_FSL}
    v_j^+(E,u) &= \min_{0 \leq i < k} \;\;v_{j,i}^-(E,u).
  \end{align}
\end{proof}

A similar result also holds for the class of $\ell_1$-regularized exponential family models.

\begin{theorem}
  Assume that $M_\infty$ is an exponential family model
  of the form
  \[
  Y \sim \exp\{ \theta'U(y) - \psi(\theta) \}\,d\nu(y),
  \]
  with $\Theta \sub \R^p$ convex, and assume
  that the model path $M_{0:d}$ is given 
  by the ever-active set for the $\ell_1$-penalized problem
  \begin{equation}\label{eq:regProblem}
  \hat\theta^\lambda = \argmin_{\theta\in \Theta} \;
  -\ell(\theta) + \lambda\|\theta\|_1,
  \end{equation}
  for $\lambda\in \Lambda \sub [0,\infty)$.
  Then, for a candidate active set $E$ of size $k$, 
  the set $A = \{E_k = E, \;U_E = u\}$ is characterized 
  exactly by the constraints $U_E = u$ and
  \[
  v_j^-(E,u) \leq U_j(Y) \leq v_j^+(E,u), \quad\forall j \notin E,
  \]
  for $v_j^-$ and $v_j^+$ given explicitly in
  (\ref{eq:vMinus_L1}--\ref{eq:vPlus_L1}).
  Thus, $A$ corresponds exactly to 
  a set with $2(p-k)$ linear inequality constraints and $k$
  linear equality constraints on $U(Y)$.
\end{theorem}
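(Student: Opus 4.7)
The plan is to mirror the structure of the previous two theorems, swapping the ``next variable to enter'' comparison for the Karush-Kuhn-Tucker (KKT) subgradient conditions of the penalized problem~\eqref{eq:regProblem}. Fix a candidate active set $E$ of size $k$ and a value $u$ for the sufficient statistics. First I would invoke Proposition~\ref{prop:forwardSSP} (the SSP for ever-active regularization paths) to reconstruct, from $(E,u)$ alone, the change points $\lambda_1,\ldots,\lambda_k$ and, for each $\lambda\in\Lambda$ with $\lambda\geq\lambda_k$, the restricted minimizer
\[
\tilde\theta^\lambda \;=\; \argmin_{\theta\in\Theta(E)} \; -\ell_E^U(\theta;\,u) + \lambda\|\theta\|_1,
\]
which on $A$ coincides with $\hat\theta^\lambda$.

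Next, I would invoke the KKT conditions for~\eqref{eq:regProblem} in the exponential family. Because $\nabla\ell(\theta;Y) = U(Y) - \nabla\psi(\theta)$, the subgradient optimality conditions read
\[
U_j(Y) - \nabla_j\psi(\hat\theta^\lambda) \;=\; \lambda\,s_j, \qquad s_j \in \partial|\hat\theta^\lambda_j|,
\]
so that for any $j$ with $\hat\theta^\lambda_j=0$ the condition becomes the interval constraint
\[
\bigl|\,U_j(Y) - \nabla_j\psi(\hat\theta^\lambda)\,\bigr| \;\leq\; \lambda.
\]
On the event $A$, every $j\notin E$ is inactive at every $\lambda\in\Lambda$ with $\lambda\geq\lambda_k$, and the KKT conditions for active coordinates in $E$ are already encoded by the equalities $U_E=u$ together with the definition of $\tilde\theta^\lambda$. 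Conversely, if $U_E=u$ and the inactive-coordinate KKT inequalities hold at every such $\lambda$, then by convexity of~\eqref{eq:regProblem} the candidate $\tilde\theta^\lambda$ (extended by zeros) is the unique minimizer $\hat\theta^\lambda$, and the ever-active set at $\lambda_k$ agrees with $E$. Hence the set $A$ is exactly characterized by $U_E=u$ together with
\[
\nabla_j\psi(\tilde\theta^\lambda) - \lambda \;\leq\; U_j(Y) \;\leq\; \nabla_j\psi(\tilde\theta^\lambda) + \lambda, \qquad \forall j\notin E, \; \forall \lambda\in\Lambda,\; \lambda\geq\lambda_k.
\]

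Nominally this is $2(p-k)|\Lambda\cap[\lambda_k,\infty)|$ linear inequalities in $U(Y)$, but for each $j\notin E$ the full family reduces to the binding lower and upper envelopes
\begin{align}\label{eq:vMinus_L1}
v_j^-(E,u) &= \max_{\lambda\in\Lambda,\;\lambda\geq\lambda_k} \bigl\{\nabla_j\psi(\tilde\theta^\lambda) - \lambda\bigr\},\\
\label{eq:vPlus_L1}
v_j^+(E,u) &= \min_{\lambda\in\Lambda,\;\lambda\geq\lambda_k} \bigl\{\nabla_j\psi(\tilde\theta^\lambda) + \lambda\bigr\},
\end{align}
yielding two effective inequalities per $j\notin E$, i.e.\ $2(p-k)$ in total, alongside the $k$ equalities $U_E=u$.

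The main obstacle I anticipate is the equivalence in the reverse direction: verifying that satisfaction of the inactive-coordinate KKT inequalities, together with $U_E=u$, genuinely forces $E_k(Y)=E$, including the matching of \emph{which} distinct ever-active set is reached at step $k$. This requires handling potential ties at change points in $\Lambda$ (resolved by the general-position/no-ties assumption), and ruling out the possibility that $\tilde\theta^\lambda$ is optimal but not uniquely so (resolved by strict convexity of~\eqref{eq:pen-lik} on the active coordinates, as already assumed). Once uniqueness of $\hat\theta^\lambda$ on $A$ is granted, the equivalence of $A$ with the stated rectangular system on $(U_j)_{j\notin E}$ follows from standard convex-duality arguments, completing the proof.
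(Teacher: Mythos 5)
Your proposal is correct and follows essentially the same route as the paper: reconstruct the path for $\lambda \geq \lambda_k$ from $(E,u)$ via the SSP, express exclusion of each $j \notin E$ through the KKT subgradient inequality $|U_j - \E_{\hat\theta^\lambda}[U_j]| \leq \lambda$ (your $\nabla_j\psi(\hat\theta^\lambda)$ is exactly the paper's $\E_{\hat\theta^\lambda}[U_j]$), and collapse the family of inequalities over $\lambda \geq \lambda_k$ into the two envelopes $v_j^\pm(E,u)$. Your added discussion of the converse direction (that the rectangular constraints force $E_k = E$) is a point the paper's proof leaves implicit, but it is the same argument in substance.
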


\begin{proof}
  Note that, because exponential family log-likelihoods are concave in their natural parameters, the problem in~\eqref{eq:regProblem} is convex. Again, once we know the sufficient statistics $U_E(Y)$ for model $k$, and that $E_k=E$, we know that
  \[
  \hat\theta^\lambda = \hat\theta^{(E,\lambda)}
  \]
  for every $\lambda \geq \lambda_k$, so we know the entire path of fits up to and including $\lambda_k$. But then, excluding variable $j \notin E$ at Lagrange parameter $\lambda$ is equivalent to
  \[
  \lambda \geq 
  \left| \pardd{\ell(\hat\theta^\lambda)}{\theta_j} \right|
  = \left|U_j - \E_{\hat\theta^\lambda}[U_j]\right|.
  \]
  On $A$, for $\lambda \geq\lambda_k$, 
  \smash{$\E_{\hat\theta^\lambda}[U_j]$} is a known function of $E$ and $u$,
  so we can set
  \begin{align}\label{eq:vMinus_L1}
    v_j^-(E,u) &= \sup_{\lambda \geq \lambda_k} \;\; 
    \E_{\hat\theta^\lambda}[U_j] - \lambda, \\
    \label{eq:vPlus_L1}
    v_j^+(E,u) &= \inf_{\lambda \geq \lambda_k} \;\;
    \E_{\hat\theta^\lambda}[U_j] + \lambda.
  \end{align}
\end{proof}

\subsection{Computing the Next-Entry $p$-Values}\label{sec:next-entry-comp}

For the next entry test, the test statistic $\lambda_k(Y)$ is a function of the regularization path. A na\"{i}ve algorithm would simply recompute the entire path for each new sample $Y^*$ from the conditional null distribution in~\eqref{eq:genDist}. However, if we are a bit more clever, we can avoid ever recomputing the path.

Let $A$ denote the event $\{E_{k-1}=E, U_{k-1}=u\}$, and let $Y^*$ represent a sample from $F$, independent of $Y$. On $A$, the next-entry $p$-value $p_k$ may be written as
\begin{align}
  p_k(Y) 
  &= \P(\lambda_k(Y^*) > \lambda_k(Y) \mid E_{k-1}(Y^*)=E, M_{k-1}(Y^*)=u)\\
  &= 1 - \P\left( \bigcup_{\lambda > \lambda_k(Y)} \{\hat\theta^\lambda(Y^*) = \hat\theta^{(E, \lambda)}(Y^*)\} \mid E_{k-1}(Y^*)=E, U_{k-1}(Y^*)=u\right) \\
  &= 1 - \P\left( \bigcup_{\lambda > \lambda_k(Y)} \hat\theta^\lambda(Y^*) = \hat\theta^{\lambda}(Y) \mid E_{k-1}(Y^*)=E, U_{k-1}(Y^*)=u\right),
\end{align}
where the third equality is justified by the fact that $\hat\theta^{(E, \lambda)}$ depends only on $U_E$, and $\hat\theta^{(E,\lambda)}(Y)=\hat\theta^\lambda(Y)$ for $\lambda > \lambda_k(Y)$. As a result, we do {\em not} need to compute the whole path for each new $Y^*$; instead, we only need to compute the regularization path for $Y$, and then check for each $\lambda > \lambda_k(Y)$ whether $\hat\theta^\lambda(Y)$ is also optimal for $Y^*$. If $-\ell$ and $P_\lambda$ are both convex, then, we merely need to verify the Karush-Kuhn-Tucker (KKT) conditions.

\paragraph{Remark} If $\lambda_k$ has a discrete distribution --- as it would, for example, if $\Lambda$ is a discrete grid --- then we can either accept that $p_k$ is conservative, or randomize $p_k$ so it is exactly uniform under the null. For a randomized $p_k$, we also need the probability $\lambda_k(Y^*)$ {\em equals} $\lambda_k(Y)$, which requires computing $\hat\theta^{(E, \lambda_k)}(Y)$ and checking whether $\hat\theta^{(E, )}$.

\section{Simulation: Sparse Linear Regression}
\label{sec:sparseReg}

Here we compare several model selection procedures in simulation. We generate data from a linear regression model with $n=100$ observations and $p=40$ variables. The design matrix $X\in\R^{n\times p}$ is a random Gaussian design with pairwise correlations of 0.3 between predictor variables.

The columns of $X$ are normalized to have length $\|X_j\|=1$, and we simulate $Y \sim \cN(X\beta,I_n)$, using a seven-sparse model with signal-to-noise ratio 5:
\[
\beta_j = \left\{\begin{matrix}5 & j = 1,\ldots,7\\ 0 &
    j = 7,\ldots 100 \end{matrix}\right. .
\]
We use known $\sigma^2=1$, so that we can compare the saturated-model test with the selected-model test. For our selection algorithm, we use the entire forward-stepwise path, for all 40 steps. 

\subsection{Single-Step $p$-Values}

For each step we compute saturated-model $p$-values, max-$z$ $p$-values, and nominal (unadjusted) $p$-values, conditioning on the signs of the active variables as proposed by \citet{lee2013exact}. Figure~\ref{fig:simulation_null_false} shows the power of all three tests for each of the first eight steps, conditional on the event that the null hypothesis is false. It is clear from Figure~\ref{fig:simulation_null_false} that the selected-model $p$-values are far more powerful than the saturated-model $p$-values, especially for the first five steps, where they have near-perfect power. The nominal $p$-values are also quite powerful, but they do not have the correct level.

\begin{figure}[h]
  \centering
  \includegraphics[width=1\textwidth]{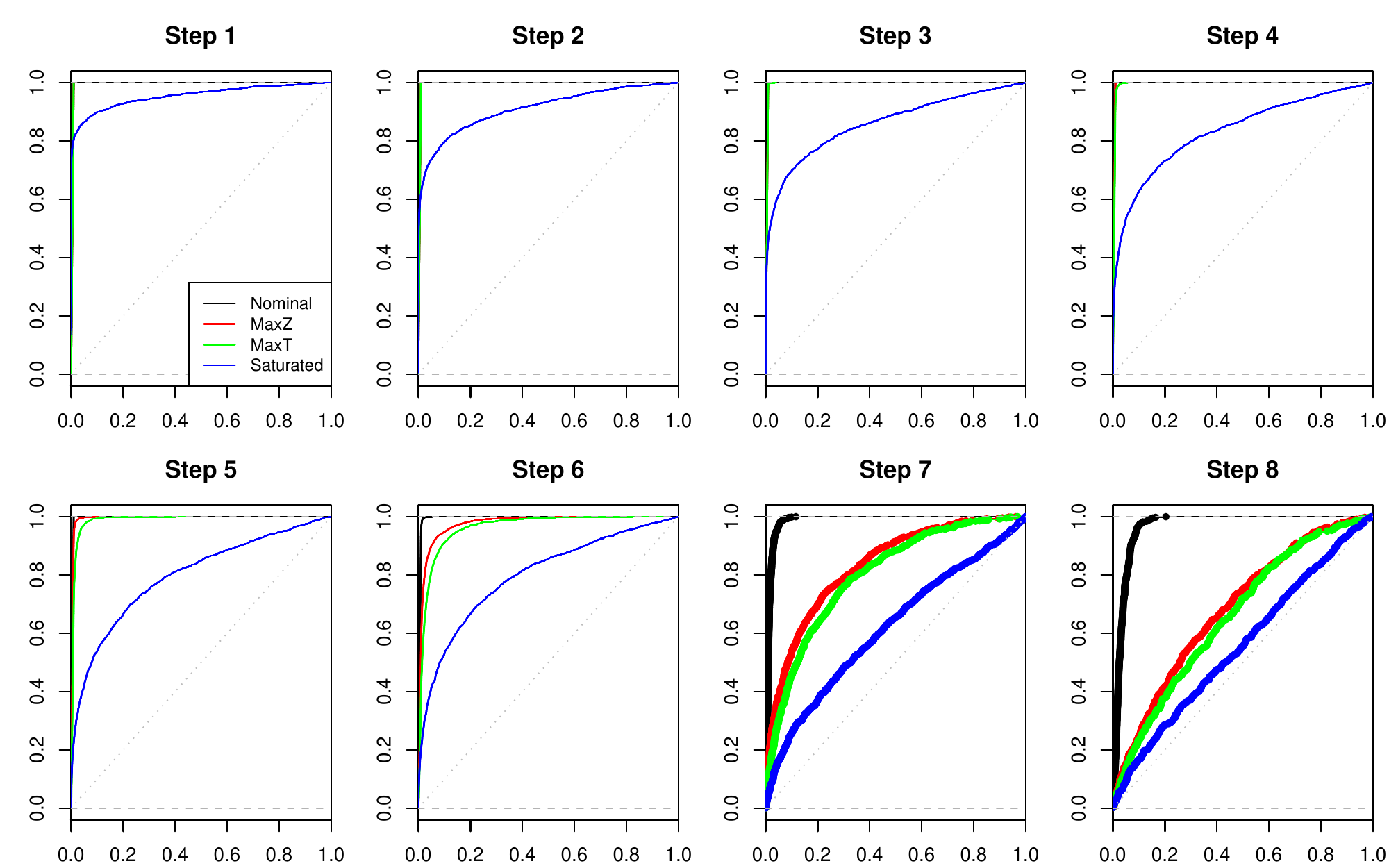}
  \caption{  CDFs of nominal (black), max-$z$ (red), max-$t$ (green) and  saturated-model (blue) $p$-values in the simulation of Section~\ref{sec:sparseReg}, conditional on testing a false null hypothesis at step $k$. The max-$t$ approaches are much more powerful than the saturated-model test. The nominal test appears to be powerful, 
  but is not $U[0,1]$ under the null, as shown below in Figure \ref{fig:simulation_null_true}.}
  \label{fig:simulation_null_false}
\end{figure}

\begin{figure}[h]
  \centering
  \includegraphics[width=1\textwidth]{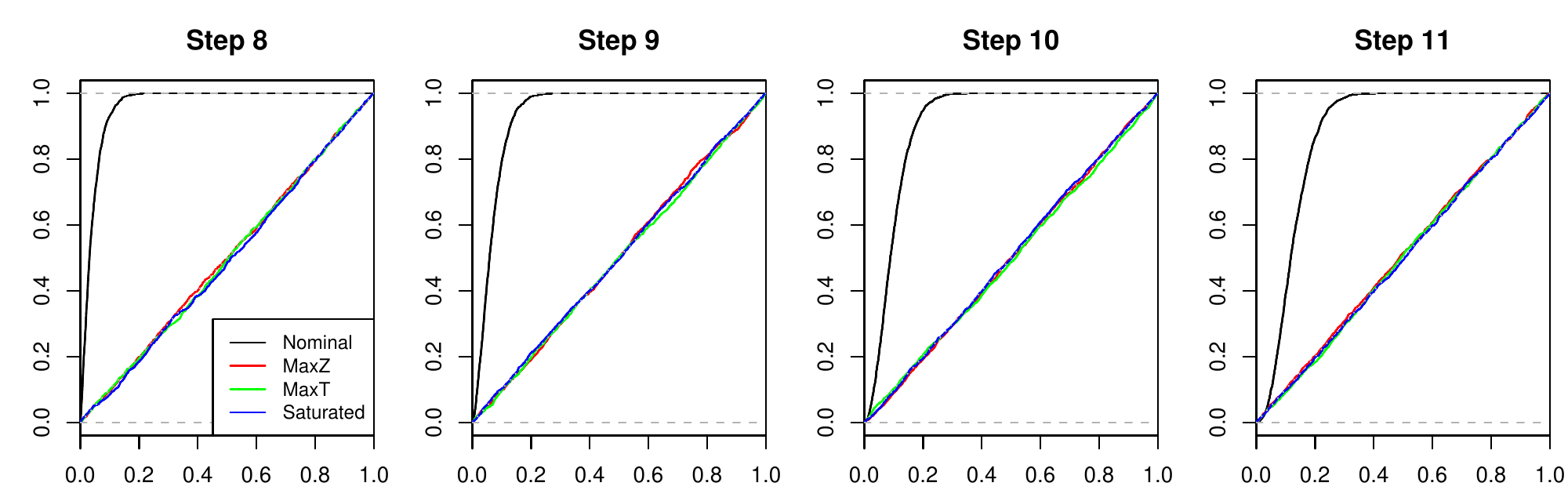}
  \caption{  CDFs of nominal (black), max-$z$ (red), max-$t$ (green) and  saturated-model (blue) $p$-values in the simulation of Section~\ref{sec:sparseReg}, conditional on testing a true null hypothesis at step $k$.  The nominal test is badly anti-conservative, while all of the other methods  show uniform $p$-values as desired.}
  \label{fig:simulation_null_true}
\end{figure}

\begin{figure}[h]
  \centering
  \includegraphics[width=1\textwidth]{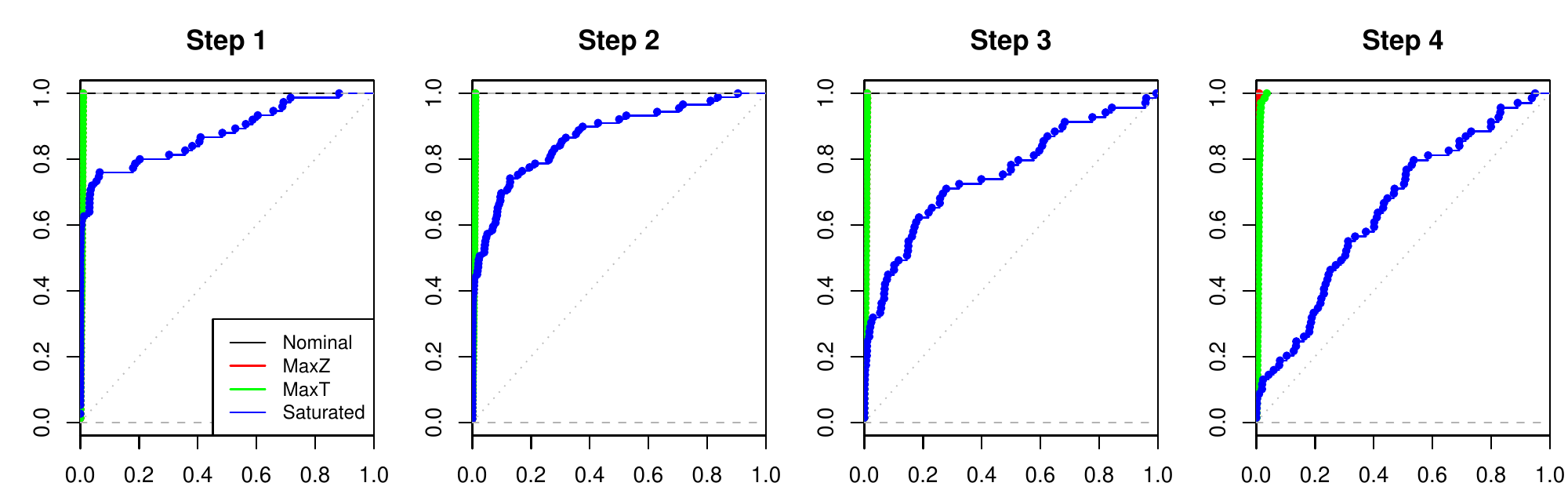}
  \caption{ CDFs of nominal (black), max-$z$ (red), max-$t$ $\sigma$ (green) and  saturated-model (blue) $p$-values in the simulation of Section~\ref{sec:sparseReg}, conditional on the event that the variable added at step $k$ is a noise variable in the full model. Here, none of the methods produce uniform $p$-values. The null hypothesis is false in most cases and so --- in our model-centric point of view --- rejection is the desired outcome.}
  \label{fig:simulation_noise_var}
\end{figure}

Figure~\ref{fig:simulation_null_true} shows the distribution of $p_k$ for steps eight to eleven, given that the null hypothesis tested at step $k$ is correct (i.e., that $k_0\leq k$). Because the correct model is seven-sparse, $k=8$ is the first index for which the null model can possibly be true. All but the nominal $p$-values are uniform by construction, and the nominal $p$-values are highly anti-conservative, as expected.

Finally, as a warning against misinterpretation of our method, we include Figure~\ref{fig:simulation_noise_var} showing the first four $p$-values for each method, conditional on event that the variable added at the given step is a noise variable in the full model. Now, none of the $p$-values are uniform. This is not an error in our implementation of the method, but rather a consequence of our ``model-centric'' point of view. If we try to add a noise variable to the model before we have included all the signal variables, then we are testing a false null hypothesis. The test rejects because there is much more signal to find, and as such, the null model is false.  

\subsection{Model-Selection Performance}

If we combine the saturated-model or selected-model $p$-values with one of our three stopping rules, we can evaluate the model-selection performance of each method. Table~\ref{tab:stopping05} reports several model-selection metrics for each combination of test and stopping rule carried out at level $\alpha=0.05$, while Table~\ref{tab:stopping20} reports the same results for the more liberal $\alpha=0.2$. The max-$t$ test, which we could use in place of the max-$z$ test if $\sigma^2$ were unknown, is included for comparison. The max-$z$-identify test was also included in the simulation, but not shown in Tables~\ref{tab:stopping05} and \ref{tab:stopping20} because its performance on all metrics was virtually identical to the max-$z$ test.

The metrics are $\P(\hk \geq k_0)$, the probability of selecting a correct model; the (model-wise) FWER; the FWER conditional on $\hk \geq k_0$, labeled cFWER; the model-wise FDR and variable-wise $\text{FDR}^{\text{full}}$; and the average number of full-model signal variables selected, labeled $\E[S^{\text{full}}]$ (where $S^{\text{full}} = R-V^{\text{full}}$). Note that $\text{FDR}^{\text{full}}$ is not explicitly controlled by any of the methods other than knockoff+, but we might nevertheless hope to perform reasonably well. Also, note that the model-wise FWER and FDR and not defined for knockoffs, since it does not select variables sequentially, and never formally tests any model.

\begin{table}[ht]
\centering

\newcommand{\guarantee}[1]{{\bf #1}}
{\small 
\begin{tabular}{|l l|cccccc|}
\hline
{} & {} &  $\P(\hk \geq k_0)$ &  $\text{FWER}$ &  $\text{cFWER}$ &  $\text{FDR}$ &  $\text{FDR}^{\text{full}}$ &  $\E[S^{\text{full}}]$ \\ \hline
Nominal & BasicStop & 0.83 & 0.58 & 0.70 & 0.120 & 0.198 & 6.81 \\ 
Nominal & ForwardStop & 0.93 & 0.92 & 0.99 & 0.387 & 0.471 & 6.93 \\ 
Max-z & BasicStop & 0.37 & \guarantee{0.02} & \guarantee{0.06} & \guarantee{0.003} & 0.045 & 6.15 \\ 
Max-z & ForwardStop & 0.66 & 0.19 & 0.29 & \guarantee{0.028} & 0.088 & 6.59 \\ 
Max-t & BasicStop & 0.32 & \guarantee{0.02} & \guarantee{0.06} & \guarantee{0.003} & 0.043 & 5.97 \\ 
Max-t & ForwardStop & 0.63 & 0.16 & 0.26 & \guarantee{0.023} & 0.080 & 6.53 \\ 
Saturated & BasicStop & 0.00 & 0.00 & 0.00 & 0.000 & 0.024 & 2.04 \\ 
Saturated & ForwardStop & 0.04 & 0.00 & 0.03 & 0.000 & 0.027 & 2.82 \\ 
Knockoff & & 0.29 & --- & --- & --- & 0.107 & 4.31 \\ 
Knockoff+ & & 0.00 & --- & --- & --- & \guarantee{0.000} & 0.00 \\  \hline
\end{tabular}}
\caption[tab:stopping]{Model-selection performance of various stopping rules applied to simulated data with 7 strong signals,  at level $\alpha=0.05$. Values theoretically guaranteed to be less than $\alpha$ are in \guarantee{bold} type. The largest standard error in each of the six columns is 0.01, 0.01, 0.02, 0.004, 0.003, and 0.05.}
\label{tab:stopping05}
\end{table}

\begin{table}[ht]
\centering

\newcommand{\guarantee}[1]{{\bf #1}}
{\small 
\begin{tabular}{|l l|cccccc|}
\hline
{} & {} &  $\P(\hk \geq k_0)$ &  $\text{FWER}$ &  $\text{cFWER}$ &  $\text{FDR}$ &  $\text{FDR}^{\text{full}}$ &  $\E[S^{\text{full}}]$ \\ \hline
Nominal & BasicStop & 0.94 & 0.93 & 0.99 & 0.415 & 0.498 & 6.94 \\ 
Nominal & ForwardStop & 0.98 & 0.98 & 1.00 & 0.640 & 0.704 & 6.98 \\ 
Max-z & BasicStop & 0.61 & \guarantee{0.12} & \guarantee{0.20} & \guarantee{0.018} & 0.073 & 6.53 \\ 
Max-z & ForwardStop & 0.86 & 0.64 & 0.75 & \guarantee{0.152} & 0.231 & 6.84 \\ 
Max-t & BasicStop & 0.58 & \guarantee{0.11} & \guarantee{0.19} & \guarantee{0.016} & 0.069 & 6.46 \\ 
Max-t & ForwardStop & 0.84 & 0.62 & 0.73 & \guarantee{0.140} & 0.219 & 6.82 \\ 
Saturated & BasicStop & 0.06 & 0.01 & 0.13 & 0.001 & 0.031 & 3.13 \\ 
Saturated & ForwardStop & 0.39 & 0.17 & 0.44 & 0.028 & 0.074 & 4.98 \\ 
Knockoff & & 0.59 & --- & --- & --- & 0.230 & 5.72 \\ 
Knockoff+ & & 0.36 & --- & --- & --- & \guarantee{0.136} & 3.76 \\  \hline
\end{tabular}}
\caption[tab:stopping]{Model-selection performance of various stopping rules applied to simulated data with 7 strong signals, at level $\alpha=0.2$. Values theoretically guaranteed to be less than $\alpha$ are in \guarantee{bold} type. The largest standard error in each of the six columns is 0.01, 0.01, 0.03, 0.004, 0.004, and 0.07.}
\label{tab:stopping20}
\end{table}

We see that BasicStop and ForwardStop control model-wise FWER and FDR, respectively, using $p$-values from max-$z$ and max-$t$, as predicted by the theory, and the nominal $p$-values do not lead to control of FWER or FDR, as expected due to their anti-conservatism. Although not guaranteed by the theory, the $p$-values from the saturated model do control model-wise FDR in this example.  The max-$z$ method is more powerful, showing higher probabilities of selecting a correct model, especially when using ForwardStop. The max-$t$ method is also powerful but slightly less so than the max-$z$, owing to the additional nuisance variable $\sigma^2$.

The knockoff method does not yield variable-wise FDR control. While the more conservative knockoff+ method does achieve variable-wise FDR control, this comes at a cost of very low power. Because of the discrete nature of knockoff and knockoff+, these two methods might perform better in an example with a larger number of signal variables to find.

\section{Further Examples and Extensions}\label{sec:further-examples}

Although we have focused our discussion on the case of linear regression --- a very common and important application of sequential model selection --- our theory is quite general and applies to many other parametric and nonparametric models. To give a sense of the broader applicability, we now discuss one parametric and one nonparametric example, each of which fit nicely into our theory.

\subsection{Decision Trees}
In this section we discuss another application of adaptive sequential inference, to binary decision trees. Specifically, the classification and regression tree  (CART)  method builds a decision tree in a top down manner, finding the best available split at each node \citep{breiman1984classification}.  When considering splitting two daughter nodes, the generic CART algorithm splits the nodes in an arbitrary order, only stopping when a node reaches a minimum size.  We instead consider a  special version of CART: the ``best first'' method, which orders splits by their achieved reduction in loss. This procedure is used, for example in the R package for gradient boosting, called {\tt gbm}. The best-first algorithm is thus a sequential selection  procedure, to which we can apply the results in this paper. Hence we can obtain independent, sequential $p$-values for each successive split in the decision tree. 

Given predictor matrix $X \in \R^{n\times p}$ and binary response vector $Y\in \{0,1\}^n$, we begin with the intercept-only model $\logit \P(Y_i \mid X_i = x) \equiv \beta_0$. At the first step we choose a splitting variable $j_1$ and split point $t_1$ which is a midpoint between two observed values of variable $j$, leading to a model of the form
\begin{equation*}
\logit \,\P(Y_i \mid X_i=x) = \beta_0 + \beta_{1} 1\{x_{j_1} > t_1\}.
\end{equation*}
While there are various ways to choose among potential variables and split points, we assume for the sake of simplicity that $(j_1,t_1)$ are chosen to maximize binomial likelihood (or equivalently, to minimize binomial deviance). At the second stage, we then split one of the two daughter nodes, choosing the split that gives the greatest increase in likelihood. If (say) we split the first node using predictor $j_2$ at split point $t_2$, it leads to the interaction model of the form
\begin{equation}
\logit \,\P(Y_i \mid X_i=x) = \beta_0 + \beta_{1} 1\{x_{j_1} > t_1\} + \beta_2 1\{x_{j_1} \leq t_1, x_{j_2} > t_2\}.
\label{eqn:intmodel}
\end{equation}
After $k$ steps we obtain a logistic regression model with $k$ features, each of which is an interaction of step functions. The model $M_k$ is thus an exponential family whose complete sufficient statistics are the sums of $Y_i$ in each leaf in the tree.

The best-first search for split points can be expressed as a modified form of greedy forward stepwise logistic regression in which the set of candidate features changes at each step, depending on what features have already been included. Nevertheless, the proof of Proposition~\ref{prop:forwardSSP} goes through essentially without modification to show that this algorithm satisfies the SSP, and the greedy likelihood ratio test gives $p$-values that are independent on nulls. Since the test statistic is discrete, we need to randomize if we want truly exact $p$-values. 

This development suggests a possible modification to the CART algorithm in which the tree is ``pruned'' using a rule like ForwardStop applied. It could be interesting to investigate whether cross-validation using the target FDR $\alpha$ as a tuning parameter might yield better results in predictive performance, compared to (say) using the tree depth or minimum leaf size as a tuning parameter.

\subsection{Nonparametric Changepoint Detection}\label{sec:nonpar}

Consider a problem in which we observe a time series $(Y_1,\ldots,Y_T)\in \cY^T$, and wish to detect times $t_1,\ldots,t_k$ where the distribution of $Y_t$ changes. For example, given a historical document, we may wish to investigate whether the entire document was written by the same author or different sections were written by different authors, by quantitatively comparing the writing style across different sections. For example, $Y_t$ could could represent the usage distribution in section $t$ over ``context-free'' words such as articles, conjunctions, and prepositions \citep[see e.g.,][]{chen2015graph}.

For simplicity we assume independence between values of $t$ and model the distribution as piecewise constant between $k$ unknown changepoints. For an active set of $k$ changepoints $E\sub \{1,\ldots, T-1\}$, let $t(j,E)$ denote the $j$th smallest element of $E$, and let $t(0,E)=0$ and $t(k+1, E)=T$. Then, the nonparametric model with changepoints $E$ is given by
\begin{equation}\label{eq:nonpar-model}
  M(E):\; Y_t \simind F_j, \quad \text{ for } \;\;t(j,E) < t \leq t(j+1,E).
\end{equation}
For $t_2>t_1$ let $V(Y;\, t_1, t_2)\in \cY^{t_2-t_1}$ denote the order statistics of $(Y_{t_1+1},\ldots, Y_{t_2})$. Then the complete sufficient statistic for $M(E)$ is
\[
U(Y; \, E) = 
\bigg(
V(Y; t(0, E), t(1, E)), \ldots, V(Y;\, t(k, E), t(k+1,E))
\bigg).
\]
Under $M(E)$, resampling from $Y$ given $U$ amounts to randomly permuting $(Y_{t(j,E)+1}, \ldots, Y_{t(j+1,E)})$ independently for each $j=0,\ldots,k$.

For $s \notin E$, testing $M(E)$ against $M(E \cup \{s\})$ amounts to a two-sample test comparing the observations coming immediately before $s$ against the observations coming immediately after $s$. Specifically, if $t(j, E) < s < t(j+1, E)$, let $W(Y; \, E, s)$ denote any two-sample test statistic that is measurable with respect to the samples $V(Y; t(j, E), s)$ and $V(Y; s, t(j+1, E))$, and large when there is strong evidence that the two samples come from different distributions. 

We can build a model in a greedy fashion beginning with $E_0 = \emptyset$ and then at step $k=1,\ldots,d$ setting
\begin{equation*}
  t_k = \argmax_t \; W(Y; \, E_{k-1}, t), 
  \quad \text{ and } \quad E_k = E_{k-1} \cup \{t_k\}.
\end{equation*}
This path algorithm satisfies the SSP because $W(Y; \, E_{i}, t_j)$ is measurable with respect to $U_k=U(Y; \, E_k)$ for every $i<j\leq k$. Thus, by inspecting $U_k$ we can determine in what order the first $k$ changepoints were added. 

By analogy to the greedy likelihood ratio statistic, a natural choice of test statistic at step $k$ is
\[
W_k^*(Y) = \max_t \; W(Y; \, E_{k-1}, t),
\]
which is measurable with respect to $\sF(E_k, U_k)$ by construction. Thus, if we base the step-$k$ $p$-value $p_k$ on the law
\begin{equation}\label{eq:condPerm}
\cL\left(W_k^* \mid E_{k-1}, U_{k-1}\right),
\end{equation}
randomizing to obtain an exactly uniform $p$-value under $M(E_{k-1})$, we will obtain an exact $\sF_k$-measurable $p$-value. As a result, the $p$-values will be independent on nulls per Theorem~\ref{thm:suffCond}. Note that sampling from the law in~\eqref{eq:condPerm} can be carried out by permuting subsequences $(Y_{t(j,E)+1}, \ldots, Y_{t(j+1,E)})$ and accepting the permutation only if it gives the same $E_{k-1}$ as we get in the original data. When this simple accept/reject algorithm is impractical we may need to resort to an MCMC strategy. We do not pursue these computational matters here.

\section{Discussion}
\label{sec:discussion}

A common proverb in statistics states that ``essentially all models are wrong, but some are useful'' \citep{box1987empirical}. In essence, a statistical model is useful if it is large enough to capture the most important features of the data, but still small enough that inference procedures can achieve adequate power and precision. Apart from theoretical considerations, the only way to know whether a model is large enough is to test it using available data.

Although model-checking is commonly recommended to practitioners as an important step in data analysis, it formally invalidates any inferences that are performed with respect to the model selected. Our work takes a step in the direction of reconciling that contradiction, but there are important questions left to be resolved. In particular: which sorts of model misspecification pose the greatest threat to our inferential conclusions, and how powerful are our tests against these troublesome sources of misspecification? 

In future work we also plan to address the matter of performing selective inference for parameters of the model actually selected. In principle, we already know how to do this using the framework of \citet{fithian2014optimal} --- we simply condition on the event that $M_{\hk}$ is selected and perform inference with respect to \smash{$M_{\hk}$} --- but the task may be computationally awkward in general.

Open-source code is available at this article's github repository at:\\ \texttt{github.com/wfithian/sequential-selection}.

R and Python code are available to implement proposals in this paper are available at:\\ \texttt{github.com/selective-inference}.
They will also be added to an upcoming version of the  R package  \texttt{selectiveInference} on CRAN.

\section*{Acknowledgments}

The authors are grateful for stimulating and informative conversations with Stefan Wager, Lucas Janson, Trevor Hastie, Yoav Benjamini, and Larry Brown. William Fithian was supported in part by the Gerald J. Lieberman Fellowship. Robert Tibshirani was supported by NSF grant DMS-9971405 and NIH grant N01-HV-28183. Ryan Tibshirani was supported by NSF grant DMS-1309174. Jonathan Taylor was supported by NSF grant DMS-1208857.

\bibliographystyle{plainnat}
\bibliography{biblio}

\newpage

\begin{appendix}

\section{Proof of Proposition~\ref{prop:counterexample}}

In this section we prove Proposition~\ref{prop:counterexample}, restated below for convenience.

\counterex*

\begin{proof}
Denote the events where we choose the two possible orderings as $A_{123}$ and $A_{213}$, respectively. Because the variables are orthogonal, the max-$z$ statistic for any model $M(E)$ is 
\[
|z_E^*|=\max_{j\notin E} |Y_j|.
\]

At each step we use the selective max-$z$ test conditioning on the current null model, where $p_k$ is small when $|z_{E_{k-1}}^*|$ is large compared to the law
\[
\cL(|z_{E_{k-1}}^*| \,\mid\, E_{k-1}, Y_{E_{k-1}}).
\]

Note that whichever of $A_{123}$ and $A_{213}$ occurs, $E_2=\{1,2\}$ and $|z_{E_2}^*|=|Y_3|$. Thus, $E_2$ is nonrandom and $p_3=2\Phi(-|Y_3|)$, the usual two-sided normal $p$-value based on $Y_3$. Finally, note that $k_0=2$ on $A_{123}$ and $k_0=1$ on $A_{213}$. 

We have
\begin{align*}
  FWER 
  &= \P(A_{123})\P(p_1<\alpha, p_2<\alpha, p_3<\alpha \mid A_{123})
  + \P(A_{213})\P(p_1<\alpha, p_2<\alpha \mid A_{213})\\
  &= \alpha \P(p_1<\alpha, p_2<\alpha \mid A_{123})
  + (1-\alpha) \P(p_1<\alpha, p_2<\alpha \mid A_{213}),
\end{align*}
using the fact that $A_{123}$ corresponds exactly to the event that $p_3<\alpha$. As $K\to\infty$, we have $z_\emptyset^*=z_{\{1\}}^*=Y_2>K/2$ with probability tending to 1. Thus, $p_1,p_2\approx 0$ on $A_{123}$, while $p_1\approx 0$ on $A_{213}$. Continuing the calculation, then,
\begin{align*}
  \lim_{K\to\infty} FWER 
  &= \alpha\cdot 1 + (1-\alpha)\cdot \P(p_2 < \alpha \mid A_{213})\\
  &= \alpha + (1-\alpha)\alpha = 2\alpha-\alpha^2.
\end{align*}
\end{proof}

\section{Computational Details for Linear Regression}\label{sec:linRegComputation}

As discussed in Section~\ref{sec:singleStep}, we will set
\[
p_k(Y) = p_{k, E_{k-1}(Y)}(Y),
\]
where $p_{k,E}(Y)$ for each $E\sub \{1,\ldots, p\}$ is a conditional $p$-value for the {\em fixed} null model $M(E):\, Y \sim \cN(X_E\beta, \sigma^2)$, valid on the selection event $A=\{E_{k-1}(Y)=E\}$.

If $\sigma^2$ is known, then $U_{k-1}(Y) = X_{E}'Y$ on $A$, so our task is to sample under the null from
\begin{equation}\label{eq:truncNorm}
\cL(Y \mid X_E'Y, Y\in A),
\end{equation}
which is a multivariate Gaussian distribution supported on the hyperplane of dimension $n-|E|$ with $X_E'Y$ fixed at its realized value, and truncated to the event $A$. 

If $\sigma^2$ is unknown, then $U_{k-1}=(X_E'Y,\, \|Y\|^2)$ on $A$ and we must sample under the null from
\begin{equation}\label{eq:truncUnif}
\cL(Y \mid X_E'Y,\, \|Y\|^2,\, Y\in A),
\end{equation}
which is a uniform distribution supported on a sphere of dimension $n-|E|-1$ defined by fixing $X_E'Y$ and $\|Y\|^2$ at their realized values, and truncated to the event $A$. For the simulations in this article, we use a hybrid of two strategies:

\paragraph{Accept/Reject Sampler} In either of these two cases, sampling would be very easy if $A$ were the entire sample space $\R^n$; the only reason we might encounter difficulty is that $A$ could be an irregular set that is difficult to condition on. However, we can always sample from $\cL(Y \mid U)$ and then keep each new sample $Y$ if it lies in the selection event.

\paragraph{Hit-and-Run Sampler} \citet{fithian2014optimal} propose sampling from laws like that in~\eqref{eq:truncNorm} via a hit-and-run sampler for the $(n-|E|)$-dimensional multivariate Gaussian distribution: at each step, to generate $Y^{t+1}$ from $Y^t$, we choose a uniformly random unit vector $\nu$ in the hyperplane, and resample $\nu'Y$ conditional on the constraints $\proj_{\nu}^\perp Y^{t+1}=\proj_{\nu}^\perp Y^t$ and the constraint $Y^{t+1}\in A$. If $A$ is a polytope this amounts to resampling a univariate Gaussian distribution with known mean and variance, truncated to an interval defined by maximizing and minimizing $\nu'Y^{t+1}$ subject to linear constraints. For the case with $\sigma^2$ unknown, the sampler is a bit more involved; see \citet{fithian2014optimal}.

Figure \ref{fig:comparison} compares the two samplers in an example with $n=100, p=20$ and one strong signal, computing the max-$t$ $p$-value at Step 3.
The ``exact'' answer was determined from bootstrap sampling with $20,000$ replications.
The Figures show the bias, variance and mean squared error (MSE) of the estimates from accept/reject and hit and run, using between 1000 and 8000 samples, averaged
over 10 different starting seeds.
\begin{figure}[htp]
\centering
  \includegraphics[width=\textwidth]{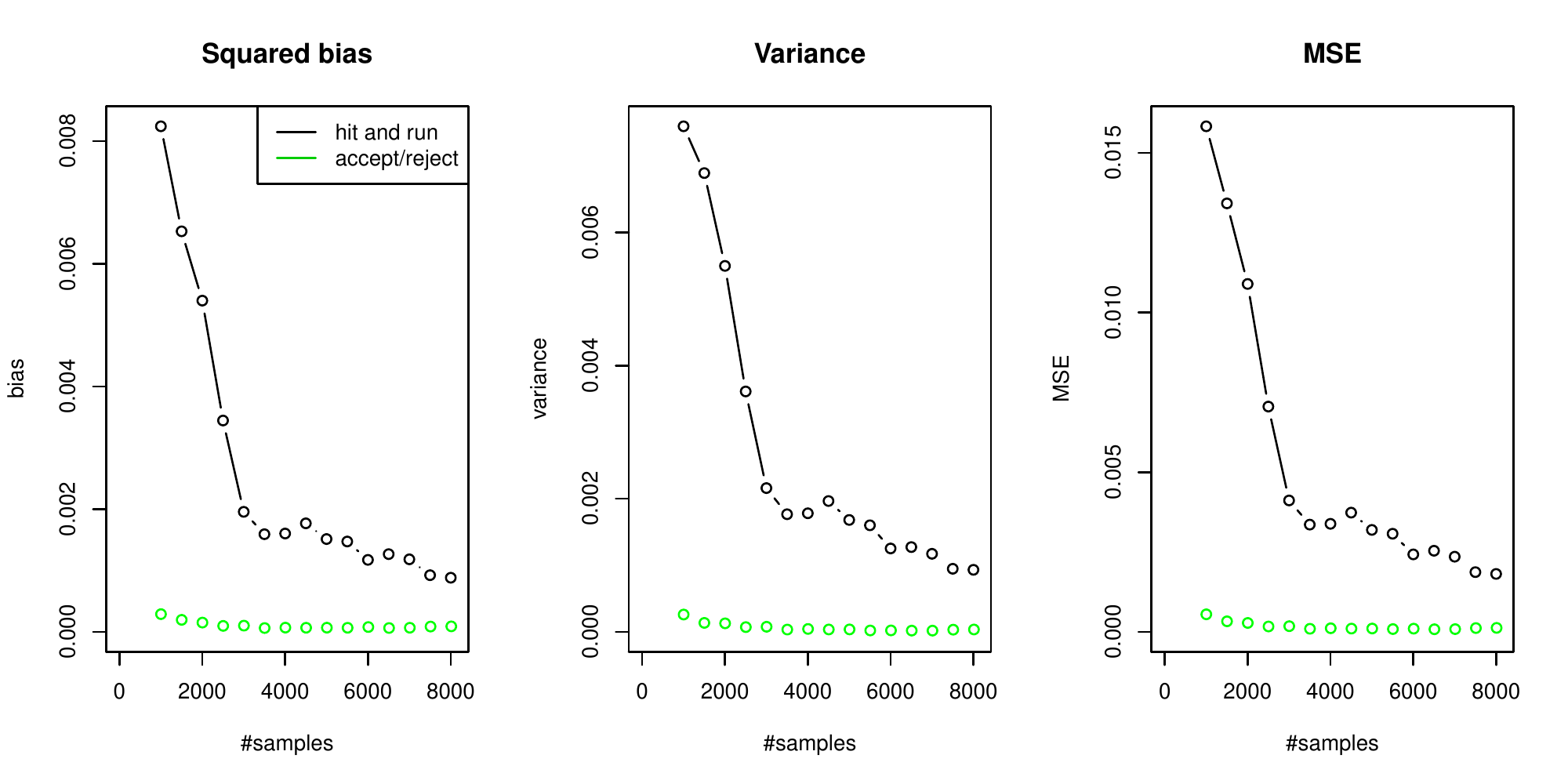}
  \caption{ Simulated example with $n= 100, p=20$ and one strong signal, computing the max-$t$ $p$-value at Step 3.  Shown are bias, variance and MSE of the estimates from accept/reject and hit and run, using between 1000 and 8000 samples, averaged over 10 different starting seeds.
}
\label{fig:comparison}
\end{figure}
We see that accept/reject is far more efficient, yielding a smaller MSE with 1000 samples then hit and run gives for 8000 samples. As we move further along the path, the set $A$ becomes more and more constraining, so that we must discard the vast majority of the samples we generate. Thus we use accept/reject for as many steps of the sequence is possible,  until the point where the acceptance rate is too low for accept/reject to be practical.

In detail, we employ a hybrid computational strategy; we use accept-reject sampling, and do so as long as we obtain at least 300 (say) samples in the first set of $B$ realizations ($75,000$ in our code)  This is an effective approach until we have reached a point in the model path where all predictors have little or no signal. When the acceptance rate gets too low, we switch to a hit-and run Monte Carlo approach. We exploit the fact that each move along a random direction  has a truncated Gaussian distribution whose truncation limits are easily computed from the properties of the polyhedron. Unlike accept/reject sampling, which produces independent samples,  the hit and run method produces dependent samples. Hence we must run it for longer with an initial burn in period, and hope that it has mixed sufficiently well.

The task is made much simpler by the fact that the selection event can we written as a polytope $A=\{y: \Gamma y \geq u\}$. Hence we don't need to run the entire forward stepwise procedure on $Y^*$: instead, we pre-compute $\Gamma$ and $u$ and then check if $\Gamma Y^* \geq u$. In the case of forward stepwise and lasso regression, $\Gamma$ has $2(p-k)$ rows after $k$ steps, as we see next.

\end{appendix}

\end{document}